\documentclass[11pt, letterpaper]{article}

\usepackage{games}
\usepackage{verbatim}


\usepackage{tikz}
\thispagestyle{empty}

\begin{document}

\title{Approximating Nash Equilibrium in Random Graphical Games}
\author{ Morris Yau \thanks{ \texttt{morrisy@mit.edu}}}
\affil{Massachusetts Institute of Technology}
\date{\today}
\maketitle

\begin{abstract}
       Computing Nash equilibrium in multi-agent games is a longstanding challenge at the interface of game theory and computer science.  It is well known that a general normal form game in $N$ players and $k$ strategies requires exponential space simply to write down.  This \emph{Curse of Multi-Agents} prompts the study of \emph{succinct} games which can be written down efficiently.  A canonical example of a \emph{succinct} game is the \emph{graphical game} which models players as nodes in a graph interacting with only their neighbors in direct analogy with markov random fields.  Graphical games have found applications in wireless, financial, and social networks.  However, computing the nash equilbrium of graphical games has proven challenging. Even for \emph{polymatrix} games, a model where payoffs to an agent can be written as the sum of payoffs of interactions with the agent's neighbors, it has been shown that computing an $\epsilon$ approximate nash equilibrium is PPAD hard for $\epsilon$ smaller than a constant.  The focus of this work is to circumvent this computational hardness by considering average case graph models i.e random graphs.  We provide a quasipolynomial time approximation scheme (QPTAS) for computing an $\epsilon$ approximate nash equilibrium of polymatrix games on random graphs with edge density greater than $\poly(k, \frac{1}{\epsilon}, \ln(N))$ with high probability.  Furthermore, with the same runtime we can compute an $\epsilon$-approximate Nash equilibrium that $\epsilon$-approximates the maximum social welfare of any nash equilibrium of the game.  Our primary technical innovation is an "accelerated rounding" of a novel hierarchical convex program for the nash equilibrium problem.  Our accelerated rounding also yields faster algorithms for Max-2CSP on the same family of random graphs, which may be of independent interest.  
       
\end{abstract}

\thispagestyle{empty}
\setcounter{page}{0}
\newpage

\tableofcontents

\thispagestyle{empty}
\setcounter{page}{0}
\newpage

\section{Introduction}

Beginning with the seminal works of Nash \cite{Nash48} and Von Neumann \cite{vonneumann1947}, game theory has been concerned with the study of strategic interactions amongst rational agents. In his landmark 1951 paper, Nash defined his namesake equilibrium and proved its existence for every game in mixed strategies.  Since the 1950's, the rise of the internet heralded a new brand of commerce for computational ecosystems comprised of many agents.  Today, the applications of game theory span a remarkable range from cryptocurrency and blockchain technologies to the deployment and design of 5G networks and edge devices.  These developments brought computation to the forefront of game theory, wherein the problem of computing the Nash equilibrium of two player games came to be of paramount importance.  

Towards resolving this problem in the negative, Daskalakis, Goldberg, and Papadimitriou \cite{DP06} in their oft celebrated work proved that computing two player nash equilibrium is PPAD-hard.  As is standard in theoretical computer science, the subsequent hope is to approximate the nash equilibrium to additive error $\epsilon$.  These efforts have proven fruitful, as a quasipolynomial time approximation scheme was achieved by Lipton, Mehta, and Markakis \cite{LMM03}.  Despite the good news, extending these techniques to larger numbers of players is met with an immediate obstacle.  

Quite simply, the normal form game between $N$ agents taking $k$ actions requires $k^N$ space to write down. Even linear time algorithms would have runtimes exponential in $N$.  Thus, an algorithmic theory of multi-agent games would have to be married with succinct models of multi-agent interactions.  A particularly attractive modeling choice is that of the \emph{graphical game} introduced by Kearns, Littman, and Singh \cite{KLS13}.  \emph{Graphical games} are a class of games where agents are represented as nodes in a graph interacting with their neighbors.  At one extreme, every game can be modeled by interactions on the complete graph which offers no savings in space. At the other end of the spectrum, games played on sparse graphs can be succinctly represented and capture settings where agents are influenced by a comparatively small set of neighbors.  This is especially relevant in network settings where devices compete for bandwidth resources with their proximal neighbors; in financial networks where institutions transact with peer institutions; in social networks where agents can influence a small set of trusted associates.  As such, \emph{graphical games} are practically well motivated models for mitigating what is colloquially known as the \emph{Curse of Multi-agents}.         

Of course, the elephant in the room is whether Nash equilibrium can be efficiently computed in graphical games.  The answer to this question ought to depend on the structure of the game.  However, even for polymatrix games, which are graphical games where the payoff to each agent is the sum of the payoffs of games played with the agent's neighbors in a graph, it is PPAD hard to compute a nash equilibrium.  Even worse for the algorithmist, Rubinstein \cite{Rubinstein14a} showed that it is PPAD hard to $\epsilon$-approximate the nash equilibrium of a polymatrix game for small constant $\epsilon > 0$.  Moreoever, the hard instance is a simple $3$-regular graph.  This astounding negative result stands in stark contrast with the good news enjoyed by approximation algorithms in two player games.  In light of strong negative results, the next natural step in the theoretical computer science toolbox is to consider average case models for games.  In this work, we aim to answer the question\\

\textit{Does there exist a large class of average case games on graphs, for which it is possible to design efficient}
\centerline{\textit{algorithms for approximating nash equilibrium?}}
\\
\\
At a high level, our answer to this question is that for polymatrix games with smooth payoffs on random graphs that are mildly dense, approximating a nash equilibrium is as easy as approximating the nash equilibrium of a two player game.  That is to say, there exists a quasipolynomial time approximation scheme.  The intuitive reason for this is that for a random graph with smooth payoffs, the effect of an agents neighbors can be captured by succinct aggregate statistics of their actions reminiscent of both mean field theory in statistical physics and the study of maximum constraint satisfaction problems in approximation algorithms.  It is this latter family of problems from which we draw technical inspiration.  

\section{Preliminaries}
We begin by introducing some notation and defining the polymatrix game.  We will also define what it means for a payoff function to be $L$-smooth.
\paragraph{Notation:}  We work with $N \in \Z^+$ player games where each player can adopt one of $k \in \Z^+$ strategies.  For a graph $G$ let $V$ and $E$ be the vertices and edges of the graph respectively.  For a node $i \in V$ let $N(i)$ denote the set of vertices that are neighbors of $i$.  Let $a_i \in [k]$ denote the strategy of player $i$.  Let $\vec{a} \in [k]^N$ denote the vector of strategies $(a_1,a_2,...,a_N)$ adopted by all $N$ players.  Let $\calP(k,V)$ be the space of probability distributions over $[k]^N$.  Let $\Gamma \defeq \{\{\mu_{ip}\}_{p \in [k]}\}_{i \in [N]}$ denote a product probability distribution over $[k]^N$ where $\mu_{ip} = \mathbb{P}(a_i = p)$.  The focus of this work will be on random graphs drawn from $G(N,p)$ where $p$ is the probability of an edge existing between any pair of vertices.  We denote the average degree $d \defeq pN$.  
\begin{definition} (Polymatrix Game)
A polymatrix game is specified by a graph $G = (V,E)$ over $N$ vertices denoted $V$ and edges $E$. Furthermore, a set of payoff games $\{p_{ij}\}_{(i,j) \in E}$ is defined on each edge of $G$.  Each payoff game $p_{ij}: [k]^2 \rightarrow \R$ takes as input a pair of actions $(a_i,a_j)$ each selected from a space of $k$ discrete actions and outputs a real valued payoff.  The payoff function to player $i$ is $p_i(a_1,a_2,...,a_N) = \sum_{j \in N(i)} p_{ij}(a_i,a_j)$.  Note that  $p_{ij}(a_i,a_j)$ is the payoff to node $i$ and $p_{ji}(a_j,a_i)$ is the payoff to node $j$.            
\end{definition}
To enable a fair comparison for $\epsilon$ approximate nash equilibrium, we will rescale the payoffs so that the payoff to each agent is in the interval $[0,1]$.  See appendix for rescaling.  Henceforth we will be using $\{f_{ij}\}_{(i,j) \in E}$ as the rescaled payoff functions. Next we define $L$-smooth payoff functions.    

\begin{definition} \label{def:smoothness} (L-Smoothness)
We say a polymatrix game is $L$-smooth for a constant $L > 0$ if for all nodes $i$ and $j \in [N]$, the payoff to node $i$ for the game $f_{ij}$ is upper bounded  
\[\max_{p,q \in [k]^2} |f_{ij}(p,q)| \leq \frac{L}{d}\]
\end{definition}
For example, if each payoff $p_{ij}(\cdot, \cdot) \in [0,1]$ and $\min_{\vec{a} \in [k]^N} p_i(a_1,...,a_N) = 0$ and $\max_{\vec{a} \in [k]^N} p_i(a_1,...,a_N) = d$  then the rescaled payoff $f_{ij}(a_i,a_j) = \frac{p_{ij}(a_i,a_j)}{d}$ is $L$-smooth for $L = 1$.  Next we define the mixed nash equilibrium.  

\begin{definition} (Mixed Nash Equilibrium)
Let $\Gamma \defeq \{\{\mu_{ip}\}_{p \in [k]}\}_{i \in [N]}$ define a probability distribution over $[k]$ for all agents $i \in [N]$ where $\mathbb{P}_\Gamma(a_i = p) = \mu_{ip}$.  A mixed Nash equilibrium is distribution over mixed strategies $\Gamma$ satisfying the following equilibrium condition.

$$\mathbb{E}_{\Gamma}[f_i(a_1,...,a_n)] \geq \E_{\Gamma}[f_i(a_1,...,a_{i-1},w,...,a_n)]$$  
For all $w \in [k]$ and for all $i \in [n]$.
\end{definition}

\begin{definition} (Approximate Mixed Nash Equilibrium)
  We define an $\epsilon$-approximate Nash equilibrium as a product distribution over mixed strategies $\Gamma \defeq \{\{\mu_{ip}\}_{p \in [k]}\}_{i \in [N]}$ satisfying
$$\mathbb{E}_{\Gamma}[f_i(a_1,...,a_n)] \geq \E_{\Gamma}[f_i(a_1,...,a_{i-1},w,...,a_n)] - \epsilon$$  
For all $w \in [k]$ and for all $i \in [n]$.  
\end{definition}
 
Without the smoothness assumption and for general graphs $G$, it is known that computing an $\epsilon$-approximate nash equilibrium is PPAD hard for a small constant $\epsilon$.  In fact that hard instance is a three regular graph.  This paints a bleak picture for efficient equilbrium computation in possibly the most amenable class of graphical games.  In this work we find that for polymatrix games with $L$-smooth payoffs defined on random graphs with density greater than $\frac{k^8}{\epsilon^4} \ln^3(N)$, there exists a $(Nk)^{O(\frac{L^4\ln(k)}{\epsilon^4})}$ time algorithm for finding an $\epsilon$-approximate nash equilibrium with high probability $1 - N^{-\ln^2(N)}$ over the randomness in the graph.  

\begin{restatable} [Polymatrix Nash Equilibrium]{thm}{maintheorem} 
\label{thm:main-theorem}
Let $(G, \{p_{ij}\}_{i,j \in [N]})$ be a polymatrix game defined on a $G$ with rescaled and normalized payoffs $\{f_{ij}\}_{i,j \in [N]}$ that are $L$-smooth.  If $G$ is a random graph with edge density $d = \frac{L^8k^8}{\epsilon^8} \ln^3(N)$.  Then there is a $(Nk)^{O(\frac{L^4\ln(k)}{\epsilon^4})}$ time deterministic algorithm for finding an $\epsilon$-approximate nash equilibrium with high probability $1 - N^{-\ln^2(N)}$ over the randomness in the graph.     
\end{restatable}

\section{Related Work}
There are a variety of works on approximating polymatrix nash equilibria.  Polymatrix zero sum games are polynomial time solvable \cite{CCDP16}.  Polymatrix nash equilibrium can be approximated to a large additive constant \cite{DFSS14}.  Work on max-2CSP and combinatorial optimization via sum of squares include \cite{BRS11} \cite{RT12} \cite{GS13}. Finally, there is work on linear programming relaxations for computing nash equilibrium \cite{WHN16}.       

\section{A Convex Hierarchy for Nash Equilibrium}

\paragraph{Idealized Program: } Consider the following polynomial system $\calP$ comprised of polynomial inequalities in indeterminates $\{a_{ir}\}_{i \in [N], r \in [k]}$ where $a_{ir}$ is the indicator of the $r$'th action for player $i$.  A pure nash equilibrium is then a solution to the following polynomial system 
\[\calP =  \begin{cases} 
      \sum_{j \in N(i)} \sum_{p,q \in [k]^2}f_{ij}(p,q)a_{ip}a_{jq} - \sum_{j \in N(i)} \sum_{q \in [k]}f_{ij}(w,q)a_{jq} \geq 0 & \forall w \in [k] \\
      a_{ip}^2 - a_{ip} = 0 &  \forall i\in [N], p \in [k]\\
      \sum_{p=1}^k a_{ip} - 1 = 0 & \forall i \in [N] 
   \end{cases}
\]
Here the second constraint ensures $a_{ip} \in \{0,1\}$, which we refer to as the "booleanity" constraint.  The third constraint ensures each agent can adopt only one strategy, which we refer to as the "coloring" constraint.  The first constraint is the pure nash equilibrium constraint.  A pure nash equilibrium is not guaranteed to exist so $\calP$ is not necessarily feasible.  However, a mixed nash equilibrium is guaranteed to exist.  Therefore we know there exists a distribution over $[k]^N$ such that      
\[\calQ =  \begin{cases} 
      \E_\zeta[\sum_{j \in N(i)} \sum_{p,q \in [k]^2}f_{ij}(p,q)a_{ip}a_{jq} - \sum_{j \in N(i)} \sum_{q \in [k]}f_{ij}(w,q)a_{jq}] \geq 0 & \forall w \in [k] \\
      \E_\zeta[a_{ip}^2 - a_{ip}] = 0 &  \forall i \in [N],\forall p \in [k]\\
      \E_\zeta[\sum_{p=1}^k a_{ip} - 1] = 0 & \forall i \in [N]\\ 
      \E_\zeta[a_{ip}a_{jq}] = \E_\zeta[a_{ip}]\E_\zeta[a_{jq}] & \forall i \neq j \in [N], \forall p,q \in [k]^2 
   \end{cases}
\]

Here, $\calQ$ is a set of constraints over a distribution $\zeta$ over elements of $[k]^N$.  The first constraint it the mixed nash equilibrium constraint.  The second and third are the booleanity and coloring constraints.  The final constraint ensures that $\zeta$ is a product distribution which gives us a mixed strategy nash equilibrium. Indeed, a solution $\zeta$ to constraints in $\calQ$ would constitute a mixed Nash equilibrium.  However, finding $\zeta$ is evidently intractable.  For starters we are searching over the space of distributions over $[k]^N$, and the constraints are desperately non-convex.  Fortunately, there is a polynomial optimization toolbox for satisfying constraints that take on the form of $\calQ$.  Although an overview of polynomial optimization is beyond the scope of this paper, we will introduce some of the fundamentals and discuss how the case of Nash equilibrium requires conceptual modifications of the standard dogma.      

\paragraph{The Relaxation Toolbox and the Problem of Feasibility: }

In combinatorial optimization, one is often faced with maximizing a polynomial subject to polynomial constraints e.g maximum cut.  This reduces to solving systems of polynomial inequalities and equalities.  However, solving a polynomial system for an "integral" solution is NP-hard.  Therefore, we must settle for solving a convex relaxation of the original nonconvex polynomial optimization.  The convex relaxation solves for a distribution, or rather a pseudo-distribution, over integral solutions to the original polynomial optimization.  After designing and solving the convex relaxation, a rounding algorithm takes as input the pseudo-distribution (a distribution over large cuts) and aims to produce an approximate solution to the original problem e.g finding a large cut.  The key problem with designing a convex relaxation for Nash equilibrium is that the integral solution, the pure nash equilibrium, is not guaranteed to exist. This is more than a technicality, as both the Sherali Adams and Sum of Squares hierarchies of $\calP$ are infeasible.  

Alternatively, one may be motivated to relax the mixed nash equilibrium as it's guaranteed to exist and is also the object we are searching for.  However, this introduces a layer of misdirection that is hard to conceptually reason about.  If we're searching for a distribution over $[k]^N$, then the relaxation is attempting to find a pseudo-distribution over distributions over $[k]^N$.  Such an approach is hard to reason about and it is not the one taken in this work.  Rather, we search for mixed nash equilibrium by identifying a simple fact about correlated equilibria.

\paragraph{Conditioning Preserving Correlated Equilibrium}

A correlated equilibrium is a distribution $\zeta$ over $[k]^N$. An action vector $\vec{a} = (a_1,a_2,...,a_N) \in [k]^N$ is drawn from a distribution $\zeta$ such that each agent $i \in [N]$ plays action $a_i$. The distribution $\zeta$ satisfies that each agent $i$ has no incentive to deviate from its prescribed action $a_i \in [k]$ given that all other agents take actions drawn from $\zeta$ conditioned on $a_i$ denoted $\zeta|_{a_i}$.
The key insight is the following fact.   

In an $L$-smooth polymatrix game on a $G(N,p)$ random graph for $p \geq \frac{L^8k^8}{\epsilon^8} \ln^3(N)$, any correlated equilibrium that remains a correlated equilibrium after conditioning on all possible actions of all possible small subsets of agents is approximately a mixed nash equilibrium over the agents whose actions are not conditioned upon.

At a high level, a correlated equilibrium is not a Nash equilibrium because the distribution $\zeta$ over $[k]^N$ is not a product distribtution.  However, if we take a correlated equilibrium and condition $\zeta$ on the action of a random  agent $a_i$, then the total entropy of $\zeta|_{a_i}$ decreases which brings it closer to being a product distribution.  It stands to reason that conditioning on the actions of a few agents $S = \{a_{i_1}, a_{i_2},..., a_{i_y}\}$ can remove most of the correlations in $\zeta|_{S}$ thus bringing it to an approximate product distribution.  If $\zeta$ is a correlated equilibrium that continues to be a correlated equilibrium after conditioning on the actions of a small number of agents, then the resulting conditional distribution $\zeta|_{S}$ would be close to a mixed nash equilibrium.   One caveat is that the agents who's actions have been conditioned upon don't necessarily participate in the resulting Nash equilibrium so some small modifications will have to be made to the resulting equilibrium.  

Our goal is then to reformulate this high level idea into a convex program and design a rounding algorithm.  This is the subject of the next few sections, which we start with an overview of the fundamentals of polynomial optimization.  For a more thorough overview, see \cite{Nesterov00} \cite{Lasserre06} \cite{Shor87} \cite{Parrilo03}.  

\paragraph{Polynomial Optimization Fundamentals:}
First we define the pseudoexpectation functional $\pE[\cdot]$.  

\begin{definition}
Let $\bold{a} \defeq \{a_{ir}\}_{i \in [N], r \in [k]}$.  
A degree $\ell$ pseudoexpectation $\pE: \R[\bold{a}]^{\leq \ell} \rightarrow \R$ is a linear functional from the space of degree up to $\ell$ polynomials in variables $\bold{a}$ to the reals. Pseudoexpectations satisfy the following desirable properties. 
\begin{enumerate}
    \item Firstly, $\pE[1] = 1$ just like an expectation.
    \item Secondly, pseudoexpectations are linear functionals so that for any collection of degree less than $\ell$ polynomials $\{P_i(\bold{a})\}_{i \in [t]}$, we have  $\pE[\sum_{i=1}^t P_i(\bold{a})] = \sum_{i=1}^t \pE[P_i(\bold{a})]$. 
    \item Finally, for sum of squares pseudoexpectation functionals, we additionally require that $\pE[SoS_{\ell}(\bold{a})] \geq 0$ where $SoS_{\ell}(\bold{a})$ is the space of polynomials of degree less than $\ell$ that can be written as the sum of squares of polynomials.  
\end{enumerate}
\end{definition}

We are often searching for a pseudoexpectation functional satisfying certain constraints.  To distinguish between different pseudoexpectation functionals, we index $\pE_{\zeta}$ by an index $\zeta$.  Occasionally, we will refer to $\zeta$ as a pseudodistribution, but this is only for indexing purposes. Pseudoexpectations for degree $\ell$ polynomials over $N$ variables can be found in $N^{O(\ell)}$ time with some additional care required for bit precision issues.  This follows from the duality of the sum of squares proof system and semidefinite programming see \cite{Nesterov00} \cite{Lasserre06} \cite{Shor87} \cite{Parrilo03}.      

\begin{algorithm}
\caption{Convex Hierarchy for Polymatrix Nash Equilibrium}\label{alg:convex}
\textbf{Input: } Polymatrix game $(G,\{f_{ij}\}_{i,j \in [N]})$ \\
Let $\ell = \frac{L^4\ln(k)}{\epsilon^4}$, and search for a degree $\ell$ pseudoexpectation $\pE_\zeta[\cdot]$ satisfying the following  
\begin{enumerate}
    \item \textbf{Conditioning Preserving Correlated Equilibrium:} For all $i \in [N]$, for all $w \in [k]$, and for all $S_i \subset \{\{a_{jr}\}_{j \in [N]/i}\}_{r \in [k]}$ of size $|S_i| \leq \ell-2$.  
    \[\pE_\zeta[(\sum_{j \in N(i)} \sum_{p,q \in [k]^2}f_{ij}(p,q)a_{ip}a_{jq} - \sum_{j \in N(i)} \sum _{q \in [k]}f_{ij}(w,q)a_{jq} +\epsilon)\prod_{a_{uv} \in S_i}a_{uv}] \geq 0\] 
    \item \textbf{Booleanity:} $\forall i \in [N]$, $\forall p \in [k]$, and  for all $S \subset \{a_{ir}\}_{i \in [N], r \in [k]}$ satisfying $|S| \leq \ell - 2$ we have
    \[\pE_\zeta[(a_{ip}^2 - a_{ip})\prod_{a_{uv} \in S}a_{uv}] = 0\]
    \item \textbf{Coloring:} $\forall i \in [N]$ and  for all $S \subset \{a_{ir}\}_{i \in [N], r \in [k]}$ of size $|S| \leq \ell - 1$ we have  
    \[\pE_\zeta[(\sum_{p=1}^k a_{ip} - 1)\prod_{ a_{uv} \in S}a_{uv}] = 0\]
    \item \textbf{Normalization:} $$\pE_\zeta[1] = 1$$
    \item \textbf{Sum of Squares:} $\forall p(\bold{a}) \in SoS_{\ell}(\{a_{ir}\}_{i \in [N], r \in [k]})$ we constrain  
    \[\pE_\zeta[p(\bold{a})] \geq 0\] 
\end{enumerate}
\end{algorithm}

\paragraph{Formulating a Semidefinite Program: }
\pref{alg:convex} searches for a pseudoexpectation $\pE_\zeta$ satisfying the following constraints.  We have the normalization and sum of squares constraints, which are standard.  Furthermore we have the booleanity and coloring constraints which are also standard.  Collectively the booleanity and coloring constraints enforce that $a_{ip} \in \{0,1\}$ and that each player can adopt only one action at a time in expectation.  These constraints are also relaxed through the sherali adams hierarchy.  This implies that the constraints still hold for any set of fixings of the actions of up to $\ell-2$ players.  Finally, we have the conditioning preserving correlated equilibrium constraint.  For all $i \in [N]$, for all $w \in [k]$, and for all $S_i \subset \{\{a_{jr}\}_{j \in [N]/i}\}_{r \in [k]}$ of size $|S_i| \leq \ell-2$.  
    \[\pE_\zeta[(\sum_{j \in N(i)} \sum_{p,q \in [k]^2}f_{ij}(p,q)a_{ip}a_{jq} - \sum_{j \in N(i)} \sum _{q \in [k]}f_{ij}(w,q)a_{jq} +\epsilon)\prod_{a_{uv} \in S_i}a_{uv}] \geq 0\]           
This constraint ensures that the equilibrium (possibly correlated) we are searching for continues to be an equilibrium even if any small subset of $\ell-2$ players fix their actions.  Here the equilibrium condition is constrained to hold for every player who's actions are not fixed.  Ultimately, the conditioning preserving correlated equilibrium constraint is our replacement for both the Nash equilibrium constraint of $\calQ$ and the product distribution constraint   $\E_\zeta[a_{ip}a_{jq}] = \E_\zeta[a_{ip}]\E_\zeta[a_{jq}]$ of $\calQ$.  It turns out that the conditioning preserving correlated equilibria are close to mixed nash equilibria in a very precise sense which we'll exploit in the rounding algorithms detailed in \pref{sec:rounding}.          

\paragraph{Hybridized Hierarchies and Proof Systems: }
Note that we are neither working with the Sherali-Adams nor Sum of Squares relaxation.  We are also not working with a standard hybridized hierarchy because the Nash constraint for player $i$ only holds for multilinear polynomials that do not include $\{a_{ir}\}_{r \in [k]}$. As such, it will be important to design a rounding algorithm which is oblivious to the weakenings in the relaxation. 

Although it is possible that the sum of squares relaxation of pure nash equilibrium is feasible, it's not immediately obvious how to prove such a statement as the pure nash equilibrium is not guaranteed to exist.  We circumvent this problem, by explicitly designing the relaxation hierarchy for the polymatrix mixed nash equilibrium.  Note, that the relaxation is infeasible without the L-smooth payoffs assumption.

\paragraph{Explicit Constraints:} As we're not working with a standard relaxation, it's important to explicitly list the constraints of our convex program. These details are left for \pref{sec:explicit} 

\paragraph{Runtime: } The runtime of our algorithm is dominated by the convex program solve of \pref{alg:convex}.  The subsequent rounding analysis succeeds with small additive error in the constraints, which allows us to apply the ellipsoid algorithm \cite{GLS81}.  The subsequent rounding procedures require no more than $(Nk)^{O(\frac{L^4 \ln(k)}{\epsilon^4})}$ time upper bounded by an exhaustive search over the variable space. 

\begin{lemma} \label{lem:runtime}
\pref{alg:convex} can be solved via the ellipsoid algorithm \cite{GLS81} to $2^{-\poly(N,k)}$ additive accuracy in $(Nk)^{O(\frac{L^4 \ln(k)}{\epsilon^4})}$ time.      
\end{lemma}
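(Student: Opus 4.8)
The goal of Lemma~\ref{lem:runtime} is the standard ``separation-oracle $\Rightarrow$ polynomial-time convex optimization'' reduction, specialized to our hybridized hierarchy. The first step is to count the variables and constraints. A degree-$\ell$ pseudoexpectation over the $Nk$ indeterminates $\{a_{ir}\}$ is specified by its values on all monomials of degree at most $\ell$, of which there are at most $(Nk)^{O(\ell)}$; this is the ambient dimension of the vector $\pE_\zeta[\cdot]$ we are searching for. With $\ell = \frac{L^4 \ln k}{\epsilon^4}$ this is exactly $(Nk)^{O(L^4 \ln(k)/\epsilon^4)}$, matching the claimed runtime. I would then observe that all of constraints (1)--(4) in \pref{alg:convex} are linear in these monomial values: the correlated-equilibrium constraints, the booleanity and coloring constraints, and normalization are each a single linear inequality or equality once the $f_{ij}(p,q)$ are plugged in. Hence the only non-polyhedral constraint is (5), the sum-of-squares / PSD constraint.

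\textbf{The separation oracle.} The second step is to exhibit a separation oracle for the feasible region that runs in $(Nk)^{O(\ell)}$ time. The linear constraints (1)--(4) number at most $(Nk)^{O(\ell)}$ (there are $N$ agents, $k$ deviations $w$, and at most $(Nk)^{\ell-2}$ conditioning sets $S_i$), so given a candidate $\pE_\zeta$ one simply checks each in turn and returns a violated one as a hyperplane. For the PSD constraint (5), the standard fact is that $\pE_\zeta[p(\mathbf a)] \ge 0$ for all $p \in SoS_\ell$ is equivalent to positive-semidefiniteness of the degree-$\ell/2$ moment matrix $M$ whose entries are the values $\pE_\zeta[\text{monomial}]$; one computes a minimum eigenvector of $M$ (in time polynomial in its dimension $(Nk)^{O(\ell)}$) and, if the eigenvalue is negative, returns the corresponding rank-one PSD-violation hyperplane. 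Composing these gives a separation oracle running in $(Nk)^{O(\ell)}$ time, which is $(Nk)^{O(L^4 \ln(k)/\epsilon^4)}$.

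\textbf{Invoking GLS and handling bit complexity.} The third step is to feed this oracle to the ellipsoid method of Gr\"otschel--Lov\'asz--Schrijver \cite{GLS81}, which finds a point within additive error $\delta$ of the feasible region (or certifies emptiness up to that tolerance) in time polynomial in the dimension, $\log(1/\delta)$, and the bit complexity of the constraints. Taking $\delta = 2^{-\poly(N,k)}$ contributes only a $\poly(N,k)$ factor to the running time, preserving the overall $(Nk)^{O(\ell)}$ bound. The remaining technical point is the usual one for SDP-type relaxations: one must argue the feasible region, if nonempty, contains a ball of radius $\ge 2^{-\poly(N,k)}$ and is contained in a ball of radius $\le 2^{\poly(N,k)}$, so that the ellipsoid method's iteration count stays polynomial. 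Feasibility itself (nonemptiness) is \emph{not} required here --- that is the content of later sections --- but the circumscribing-ball bound follows since all monomial values are bounded by $1$ in absolute value (a consequence of booleanity and normalization plus Cauchy--Schwarz for the pseudoexpectation), and the inscribed-ball bound follows by a mild perturbation of any feasible point, exploiting that the correlated-equilibrium constraints are slack by the additive $\epsilon$ that we explicitly added.

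\textbf{Main obstacle.} The genuinely delicate part is the bit-precision / well-conditioning argument: ensuring that the additive $\delta$-error the ellipsoid method returns in the \emph{monomial values} translates into at most $\poly(N,k)\cdot\delta$ error in each of the constraints of \pref{alg:convex}, so that the downstream rounding in \pref{sec:rounding} --- which the paper says ``succeeds with small additive error in the constraints'' --- can tolerate it. Since each constraint is a linear form in the monomials with coefficients of magnitude $O(L/d) \cdot d = O(L)$ times at most $(Nk)^{O(\ell)}$ terms, the error amplification is at most $(Nk)^{O(\ell)}$, so choosing $\delta = 2^{-\poly(N,k)}$ with a large enough polynomial absorbs it; but stating this cleanly, and confirming that the near-feasible $\pE_\zeta$ the ellipsoid method outputs still satisfies a version of all five constraint families with $O(2^{-\poly(N,k)})$ slack, is where the real care goes. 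Everything else is bookkeeping of the form ``dimension $(Nk)^{O(\ell)}$, oracle time $(Nk)^{O(\ell)}$, $\log(1/\delta) = \poly(N,k)$, hence total $(Nk)^{O(\ell)} = (Nk)^{O(L^4 \ln(k)/\epsilon^4)}$.''
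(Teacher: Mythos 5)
Your proposal is correct and follows essentially the same route as the paper's proof, which simply observes that \pref{alg:convex} is a convex program in $(Nk)^{O(L^4\ln(k)/\epsilon^4)}$ variables and constraints with a feasible region of size at least $2^{-\poly(N,k)}$, that the downstream rounding (\pref{lem:rounding-main} and \pref{alg:bestrespond}) tolerates $2^{-\poly(N,k)}$ additive error in the constraints, and then invokes the ellipsoid method of \cite{GLS81}. Your write-up merely fills in the standard details (the eigenvector-based separation oracle for the moment-matrix PSD constraint, the circumscribing/inscribed ball bounds, and the bit-precision bookkeeping) that the paper leaves implicit.
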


\begin{proof}
\pref{alg:convex} is a convex program in $(Nk)^{O(\frac{L^4\ln(k)}{\epsilon^4})}$ variables and $(Nk)^{O(\frac{L^4\ln(k)}{\epsilon^4})}$ convex and explicitly bounded constraints with a feasible region of size greater than $2^{-\poly(N,k)}$.  Furthermore, both \pref{lem:rounding-main} and \pref{alg:bestrespond} hold for any $2^{-\poly(N,k)}$ additive error in the constraints. Therefore, the ellipsoid algorithm \cite{GLS81} produces a solution to \pref{alg:convex} in $(Nk)^{O(\frac{L^4 \ln(k)}{\epsilon^4})}$ time.         
\end{proof}

\section{Rounding Algorithms} \label{sec:rounding}
In this section we present our rounding algorithm.  Our rounding proceeds in two phases.  In the first phase we adopt the conditioning procedure of \cite{BRS11} to bring $\pE_{\zeta}$ close to a product distribution which we denote $\pE_{\hat{\zeta}}$.  Here, to achieve quasipolynomial running time we must improve the $k$ dependence of the \cite{BRS11} analysis. Towards these ends, we develop a novel polynomial approximation argument of the pseudo-covariance matrix, which accelerates the \cite{BRS11} rounding at the expense of an additional $\poly(k)$ in the edge density of the graph.  This argument is the key to our quasipolynomial time algorithm.   

In the second phase of our rounding, we must take the approximate product distribution $\pE_{\hat{\zeta}}$ and turn it into a mixed nash equilibrium.  This is done via the  BestResponsePursuit \pref{alg:brp}.  At a high level, \pref{alg:brp} takes all players which do not have the mixed nash equilibrium constraint satisfied, which we denote $K_1$, and attempts to set them to best respond to the mixed strategies of the players in $V/K_1$.  Of course this does not work because $K_1$ could be a clique.  This would only work, if the following condition is satisfied.  For all $i \in V$, the fraction of edges between $i$ and $K_1$ is small i.e $\frac{N_{K_1}(i)}{d} \leq \frac{\epsilon}{L}$.  If this is the case, then setting every player in $K_1$ to best respond to the mixed strategies of player in $V/K_1$ would result in an $2\epsilon$ approximate mixed nash equilibrium.  This is a consequence of the $L$-smoothness assumption.  

A priori, $K_1$ does not satisfy this condition.  However, given $K_1$ it is possible to show that there exists a set $W$ such that $K_1 \subset W$, $|W| \leq \frac{4}{3}|K_1|$, and for all $i \in V/W$, we have $\frac{N_W(i)}{d} \leq \frac{\epsilon}{L}$. Then we can hope to form a set $U \subseteq W$ such that $\frac{|U|}{|W|} \geq \frac{9}{10}$ and for all $i \in U$, $\frac{N_W(i)}{d} \leq \frac{\epsilon}{L}$.  This allows us to set the players in $U$ to best respond to the mixed strategies of players in $V/W$.  After this is done, we are left with a set of players $K_2 = W/U$, where $|K_2| \leq \frac{|K_1|}{5}$.  Now iterating the entire procedure on $K_2$ instead of $K_1$ we can hope to iteratively contract the set of players not participating in the mixed nash equilibrium until only a small set of players are left.  This small set of players can then all be set to best respond.

Each iteration of the above argument comes at a cost to the approximation of the mixed nash equilibrium so it is important to keep track of the sources of error as they add up over multiple iterations.  The details of this argument is the subject of \pref{sec:best-response-pursuit}. Next we state the guarantees of the conditioning rounding procedure.  To start we define a notion of distance quantifying how far a probability distribution is from a product distribution.  

\begin{definition}
Let $\Delta(i)$ be the local correlation at node $i \in [N]$ defined to be.   
\[\Delta(i) \defeq \frac{1}{|N(i)|}\sum_{j \in N(i)} \sum_{(p,q) \in [k]^2} |\pE[a_{ip} a_{jq}] - \pE[a_{ip}]\pE[a_{jq}]|]\]
Let $\Delta$ be the average local correlation of the nodes in $G$.  
\[\Delta \defeq \frac{1}{2|E|}\sum_{(i,j) \in E} \sum_{(p,q) \in [k]^2} |\pE[a_{ip} a_{jq}] - \pE[a_{ip}]\pE[a_{jq}]|]\]
\end{definition}

Intuitively, correlation/covariance capture how far a pair of random variables $a_i$ and $a_j$ are from a product distribution.  The average covariance over all pairs of random variables over all edges of $G$ is then a measure of how far a distribution in $\calP(k,V)$ is from a product distribution.

\begin{fact} [Conditioning] For a degree $\ell$ pseudoexpectation $\pE_{\zeta}$ over the polynomial system $\{x_i^2 = x_i\}_{i=1}^N$, the following conditioning procedure is well defined.  

\[\pE_{\zeta}[p(x_1,...,x_N) | x_i = 1] \defeq \frac{\pE_{\zeta}[p(x_1,...,x_N)x_i]}{\pE_{\zeta}[x_i]}\]
for any $p(x_1,...,x_N) \in \R[x_1,...,x_N]^{\leq \ell}$
\end{fact}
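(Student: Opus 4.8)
The plan is to verify that the functional $p \mapsto \pE_\zeta[p\, x_i]/\pE_\zeta[x_i]$ --- with the indeterminates instantiated by the action indicators $a_{jr}$ in the application --- satisfies each defining property of a pseudoexpectation over $\{x_i^2 = x_i\}_{i=1}^N$, at a degree smaller than that of $\pE_\zeta$ by a small constant. First I would settle that the quotient makes sense. When $\pE_\zeta[x_i] > 0$ it is literally a real number; the only issue is when $\pE_\zeta[x_i] = 0$. Here booleanity gives $\pE_\zeta[x_i^2] = \pE_\zeta[x_i] = 0$, and pseudo-Cauchy--Schwarz for SoS pseudoexpectations --- obtained by expanding $\pE_\zeta[(t\, x_i - q)^2] \ge 0$, a nonnegative quadratic in $t \in \R$, and reading off its discriminant --- gives $\pE_\zeta[p\, x_i]^2 \le \pE_\zeta[p^2]\,\pE_\zeta[x_i^2] = 0$, so $\pE_\zeta[p\, x_i] = 0$ as well and the quotient is declared to be $0$. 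In the rounding of \pref{sec:rounding} this degenerate branch never occurs, since one only conditions on a literal $a_{jr}$ with $\pE_\zeta[a_{jr}] \ge 1/k$, which exists because the coloring constraint of \pref{alg:convex} forces $\sum_r \pE_\zeta[a_{jr}] = 1$.

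Next I would check the axioms for $\pE_\zeta[\,\cdot \mid x_i = 1]$. Linearity is immediate, since $p \mapsto p\, x_i$ is linear, $\pE_\zeta$ is a linear functional, and $\pE_\zeta[x_i]$ is a fixed scalar. Normalization is $\pE_\zeta[1 \mid x_i = 1] = \pE_\zeta[x_i]/\pE_\zeta[x_i] = 1$. For SoS-nonnegativity, take a sum of squares $p = \sum_m q_m^2$ with $\deg q_m \le \ell/2 - 1$ and use booleanity to write $q_m^2\, x_i = (q_m x_i)^2 - q_m^2(x_i^2 - x_i)$; since $\pE_\zeta$ annihilates $q_m^2(x_i^2 - x_i)$ (a booleanity constraint of \pref{alg:convex}, as $\deg q_m^2 \le \ell - 2$) and is SoS-nonnegative, $\pE_\zeta[p\, x_i] = \sum_m \pE_\zeta[(q_m x_i)^2] \ge 0$, and dividing by $\pE_\zeta[x_i] > 0$ keeps the sign. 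The same device of multiplying through by $x_i$ shows $\pE_\zeta[\,\cdot \mid x_i = 1]$ inherits every equality or inequality constraint that $\pE_\zeta$ satisfies with one unit of degree slack --- booleanity, coloring, and the conditioning-preserving correlated-equilibrium constraint of \pref{alg:convex} --- which is exactly what licenses the iterated conditioning performed in \pref{sec:rounding}.

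The one place that needs care is the degree accounting: multiplying by $x_i$ costs at most one unit of degree --- and none at all for monomials already containing $x_i$, once $x_i^2 = x_i$ is applied --- so $\pE_\zeta[\,\cdot \mid x_i = 1]$ is a valid SoS pseudoexpectation of degree at least $\ell - 2$, which is the usual budget and still leaves room for the degree-$2$ SoS arguments (pseudo-Cauchy--Schwarz, pseudo-variance bounds) invoked in the rounding. I expect this bookkeeping to be the main obstacle: the rounding conditions on a set of up to $\ell - 2$ literals in sequence, so one must confirm that $\ell$ was chosen large enough that the functional surviving all of these conditionings still has degree at least $2$, and that each conditioning step is on a literal of pseudo-mass at least $1/k$ so the $0/0$ case is avoided throughout. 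The axiom verifications themselves are routine linear-functional bookkeeping.
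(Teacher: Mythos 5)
The paper offers no proof of this Fact at all --- it is invoked as folklore from the SoS/Sherali--Adams literature --- so there is nothing to match your argument against; what you have written is the standard verification, and it is correct. Your key steps (linearity and normalization are immediate from the quotient; SoS-nonnegativity via the identity $q^2 x_i = (q x_i)^2 - q^2(x_i^2 - x_i)$ together with the booleanity constraints, extended by linearity from monomial products to arbitrary $q^2$ of degree at most $\ell-2$; the degree drop from $\ell$ to $\ell-2$) are exactly how this fact is proved in the references the paper cites, and your observation that the conditioned functional inherits the correlated-equilibrium, booleanity, and coloring constraints with one unit of degree slack is precisely what the paper relies on implicitly in \pref{sec:rounding}. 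One small caveat: your pseudo-Cauchy--Schwarz treatment of the degenerate case $\pE_\zeta[x_i]=0$ only applies to $p$ of degree at most $\ell/2$ (the discriminant argument needs $\pE_\zeta[p^2]$ to be in the functional's domain), so it does not literally cover every $p \in \R[x_1,\dots,x_N]^{\leq \ell}$ as stated; the cleaner resolution, which you also give, is that the rounding procedure only ever conditions on a literal drawn with probability $\pE_\zeta[a_{ip}] > 0$, so the $0/0$ branch never arises and one may simply restrict the definition to events of positive pseudo-probability.
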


Note that the conditioning preserving correlated equilibrium constraint for player $i$ is preserved after conditioning on the actions of any player $j \neq i$.  Furthermore, this holds for any set of $\ell-1$ such conditionings.  Next, we will prove our conditioning rounding result for the class of low threshold rank graphs, which are a generalization of expander graphs and include the $G(N,p)$ graphs we study in this work.    

\begin{definition}
We say a graph $G$ has $(\rho, \tau)$ threshold rank if 
the normalized adjacency matrix $\frac{1}{d}A$ of $G$ with eigenvalues $\{\lambda_i\}_{i=1}^N$ satisfies $|\{i \in [N]| \lambda_i \geq \rho\}| = \tau$. 
\end{definition}
In particular, the $G(N,p)$ random graphs considered in this work have $(\frac{1}{\sqrt{d}},1)$ threshold rank. The following lemma is the main rounding guarantee of the conditioning procedure.    
\begin{lemma} \label{lem:rounding-main}
Consider the following procedure, 
\begin{enumerate}
    \item Initialize seed set $S = \emptyset$, and assignments $\cal{R} = \emptyset$
    \item Let $a_1,a_2,...,a_N$ be random variables such that $\mathbb{P}(a_i = p) = \pE_\zeta[a_{ip}]$ for all $p \in [k]$ and for all $i \in [N]$.
    \item Let $i \in V/S$ be drawn uniformly at random, and append $i$ to $S$
    \item draw $p_i \sim a_i$ and append $p_i$ to $\cal{R}$
    \item Update $\pE_\zeta[\cdot] \gets \pE_\zeta[\cdot|a_i = p_i]$
    \item While $|S| \leq \frac{L^4\ln(k)}{\epsilon^4}-2$ go to step (2.)
\end{enumerate}
Let $\pE_{\hat{\zeta}}$ be the pseudoexpectation that is the output of the algorithm above, and let $\Delta_{\hat{\zeta}}$ be its associated local correlation.  When the algorithm terminates we have $S \defeq \{i_1,i_2,...,i_{r}\}$ and $R \defeq \{p_{i_1}, p_{i_2}, ..., p_{i_r}\}$. Let $\cal{U}$ be the distribution over $\{(i_y,p_y)\}_{y \in [r]}$ induced by the algorithm.  Then 
$$\E_{\{(i_y,p_y)\}_{y \in [r].  } \sim \cal{U}}[\Delta_{\hat{\zeta}}|a_{i_1} = p_{i_1}, a_{i_2} = p_{i_2} ,...,a_{i_r} = p_{i_r}] \leq \frac{\epsilon^2}{L^2}$$ 
\end{lemma}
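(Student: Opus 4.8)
The plan is to run the standard global correlation / potential-argument for conditioning (as in \cite{BRS11}), but with the $k$-dependence in the bound improved so that the number of conditioning rounds $\ell = \Theta(L^4\ln(k)/\epsilon^4)$ suffices. The potential function I would use is the total pseudo-entropy (or a surrogate such as $\sum_i \mathrm{Var}[a_i]$, summed over the $k$ indicator coordinates). First I would show that each conditioning step $\pE_\zeta[\cdot] \gets \pE_\zeta[\cdot \mid a_i = p_i]$, with $i$ drawn uniformly from $V \setminus S$ and $p_i$ drawn from the current marginal, decreases this potential in expectation by an amount proportional to $\mathbb{E}_i[\Delta(i)^2]$ — i.e. the expected squared local correlation at a random node. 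This is the key local-decrement inequality: conditioning on a variable that is highly correlated with its neighbors removes a lot of entropy. Since the potential starts at most $O(N\ln k)$ and is nonnegative, after $r$ rounds the average (over rounds and over the randomness $\mathcal{U}$ of which vertices/colors were chosen) of $\mathbb{E}_i[\Delta(i)^2]$ is at most $O(\ln k / r)$.

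Second, I would convert a bound on the expected squared local correlation at a \emph{uniformly random} node into a bound on the \emph{average} local correlation $\Delta_{\hat\zeta}$ over all edges. This is exactly where the $(\rho,\tau)$-threshold-rank hypothesis enters: for $G(N,p)$ the normalized adjacency matrix has only one eigenvalue above $\rho = 1/\sqrt{d}$, so one can write $\frac{1}{d}A = \frac{1}{N}\mathbf{1}\mathbf{1}^\top + E$ with $\|E\|_{op} \le \rho$. Viewing $\Delta$ as a quadratic form in the vector of "correlation masses," the $\mathbf{1}\mathbf{1}^\top$ part is controlled by the uniform-node quantity $\frac{1}{N}\sum_i \Delta(i)$, and the spectral-error part contributes at most $\rho$ times a crude bound on the correlations (each $|\pE[a_{ip}a_{jq}] - \pE[a_{ip}]\pE[a_{jq}]| \le 1$, so the total is $\le k^2$ per node). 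Choosing $d \ge \mathrm{poly}(k, 1/\epsilon, \ln N)$ — precisely the density hypothesis of the theorem — makes $\rho k^2 = k^2/\sqrt d$ negligible relative to $\epsilon^2/L^2$, so $\Delta_{\hat\zeta} \lesssim \frac{1}{N}\sum_i \Delta(i) + \text{(tiny)}$. Then Cauchy–Schwarz relates $\frac{1}{N}\sum_i \Delta(i)$ to $\sqrt{\frac{1}{N}\sum_i \Delta(i)^2}$, and combining with the potential bound gives $\mathbb{E}_{\mathcal U}[\Delta_{\hat\zeta}] \le O(\sqrt{\ln k / r}) + \text{tiny}$. Setting $r = \ell - 2 = \Theta(L^4 \ln k/\epsilon^4)$ makes the right-hand side at most $\epsilon^2/L^2$, as claimed.

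The step I expect to be the main obstacle — and the one the introduction flags as the paper's technical innovation — is getting the $k$-dependence right in the local decrement inequality. The naive argument bounds the entropy drop from conditioning on a single $k$-ary variable $a_i$ by something like $\mathbb{E}[\Delta(i)]^2$ but with a loss that scales polynomially in $k$ (because one works with the $k\times k$ pseudo-covariance block between $a_i$ and $a_j$, and relating its entrywise $\ell_1$ mass to an entropy decrement loses factors of $k$). To avoid paying $k$ in the \emph{exponent} $\ell$, I would introduce a low-degree polynomial approximation of the pseudo-covariance matrix (the "accelerated rounding" alluded to in \pref{sec:rounding}): approximate the relevant matrix function (e.g. the one appearing when bounding mutual information by covariance) by a polynomial of degree $O(\mathrm{polylog})$, so that the pseudoexpectation can certify the decrement with only the $k$-dependence absorbed into the graph density $d$ rather than into $\ell$. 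Handling the fact that the correlated-equilibrium constraint is only imposed for multilinear test polynomials avoiding $\{a_{ir}\}_r$ is a compatibility check — conditioning on $a_j$ for $j \ne i$ preserves exactly the form of constraint needed — and I would note it but not belabor it. Finally I would track that all inequalities are robust to the $2^{-\mathrm{poly}(N,k)}$ additive slack from the ellipsoid solve (\pref{lem:runtime}), which is immediate since every bound above has $\mathrm{poly}(1/\epsilon,\ln N)$-sized margins.
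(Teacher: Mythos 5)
Your overall skeleton matches the paper's (condition to drive down global mutual information, then use the spectral structure of $G(N,p)$ to convert a global bound into a bound on the edge-averaged local correlation $\Delta_{\hat\zeta}$), but the pivotal step of that conversion is not established, and it is exactly the step the paper's ``accelerated rounding'' exists to handle. In your second step you write $\frac{1}{d}A = \frac{1}{N}\mathbf{1}\mathbf{1}^{\top} + E$ with $\|E\|_{op}\le \rho = 1/\sqrt{d}$ and claim the error term contributes at most $\rho k^{2}$ because each covariance magnitude is at most $1$ and there are $k^{2}$ of them per node. That does not follow. With $K_{ij} \defeq \sum_{p,q}|\pE[a_{ip}a_{jq}]-\pE[a_{ip}]\pE[a_{jq}]|$, the error term is $\frac{1}{N}\langle E, K\rangle \le \frac{1}{N}\|E\|_{op}\|K\|_{*}$, so you need a nuclear-norm bound on $K$ of order $Nk^{O(1)}$; your ``$k^{2}$ per node'' accounting is implicitly the trace bound $\|K\|_{*}\le \mathrm{tr}(K)\le Nk^{2}$, which is only valid for PSD matrices. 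Because of the entrywise absolute values, $K$ is not PSD, and the generic bounds (entrywise $\ell_{1}$, or $\sqrt{\mathrm{rank}}$ times Frobenius) give $\|K\|_{*}$ as large as $N^{2}k^{2}$ or $N^{3/2}k^{2}$, making the error term grow polynomially in $N$ rather than like $k^{2}/\sqrt{d}$. This is precisely why the paper replaces $|x|$ by a low-degree polynomial $f$ with $|x|\le f(x)\le|x|+\epsilon'$ (\pref{lem:jackson}), forms the $Nk\times Nk$ matrix $H$ with entries $f(\mathrm{cov})$, and uses the Schur product theorem: each Hadamard power of the PSD pseudo-covariance matrix is PSD, so its nuclear norm equals its trace $\le Nk$, giving $\|H\|_{*}\le O(Nk/\epsilon')$ (\pref{lem:nuclear_norm}); plugging this into the eigenvalue-split bound (\pref{lem:loc_to_glob}) with $\epsilon'\approx\epsilon/k^{2}$ is what makes the below-threshold contribution small once $d\ge \mathrm{poly}(k,1/\epsilon)\ln^{3}(N)$. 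You do invoke a polynomial approximation of the pseudo-covariance matrix, but you attach it to the entropy-decrement step, where the paper needs nothing of the sort.

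A secondary issue: your per-round decrement inequality (the potential drops in expectation by roughly $\mathbb{E}_{i}[\Delta(i)^{2}]$) is asserted rather than proved, and it is where BRS11-style arguments typically leak $\mathrm{poly}(k)$ factors into the number of rounds. The paper sidesteps any per-step local decrement: it only uses the standard statement (\pref{lem:global_correlation}, following RT12) that after conditioning on a random subset of roughly $\ln(k)/\epsilon^{2}$ vertices the average pairwise mutual information is small, then Pinsker converts mutual information into summed covariance with no extra $k$ loss, and Jensen moves the expectation inside the square root. So your route is repairable, but you must replace the ``crude bound'' on the spectral error with a genuine nuclear-norm bound on the absolute-value-transformed covariance matrix --- which is the paper's main technical contribution for this lemma --- and either prove or avoid the local decrement inequality you assumed.
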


The conditioning algorithm selects a player at random and fixes/conditions their actions according to the local probability distribution prescribed by $\pE_{\zeta}$.  Iterating this algorithm reduces the local correlation.  The proof of \pref{lem:rounding-main} is left for \pref{sec:mutual-information}.  Additionally, it turns out there's no need for the procedure to be executed iteratively.  One could simply enumerate all possible conditionings of the actions of $\ell -2$ agents.  One such conditioning is guaranteed to bring the local correlation down to $\epsilon$.  Next, we move onto the guarantees of \pref{alg:brp}.  

\begin{restatable}[Best Response Pursuit]{lem}{bestresponsepursuit}
\label{lem:best-response-pursuit}
Let $\pE_{\zeta}$ be a pseudoexpectation satisfying the constraints in \pref{alg:convex}. Let $\Delta_\zeta \leq \epsilon^2$ be the local correlation of $\pE_{\zeta}$.  Let $K_1 \subset V$ be the set of nodes satisfying $\Delta_\zeta(i) \geq 2\epsilon$. Let $\mu_{ip} = \pE[a_{ip}] \text{   for all }  i \in [N] \text{ and for all } p \in [k]$.  Let $\Gamma_1 \gets \{\{\mu_{ip}\}_{p=1}^k\}_{i=1}^N$.  
We define $K_2,K_3,...,K_T$ and $\Gamma_2,\Gamma_3,...,\Gamma_T$ iteratively as follows. Let $W = Flag(K_t,G)$ and let $(K_{t+1}, \Gamma_{t+1}) = BestRespond(W,\Gamma_t)$.  Here we iterate until $K_T = \emptyset$.  Then we have for all $i \in V$ with probability $1 - N^{-\ln^2(N)}$ over the randomness of the graph,    
\[\sum_{j \in N(i)} \sum_{(p,q) \in [k]^2} f_{ij}(p,q) \pE_{\Gamma_T}[a_{ip}]\pE_{\Gamma_T}[a_{jq}] - \sum_{q \in [k]} f_{ij}(w,q) \pE_{\Gamma_T}[a_{jq}]  \leq 120L\epsilon\] 
\end{restatable} 
Given \pref{lem:best-response-pursuit} we can conclude the following guarantee holds for our rounding algorithm.  
\begin{lemma} [Rounding] \label{lem:best-response-main}
Let $\pE_{\zeta}$ be the pseudoexpectation output by \pref{alg:convex}, which is then passed to the conditioning procedure in \pref{lem:rounding-main}.  Let $\pE_{\hat{\zeta}}$ be the pseudoexpectation output by the conditioning procedure in \pref{lem:rounding-main}. Then \pref{alg:brp}, BestResponsePursuit($\pE_{\hat{\zeta}}$) outputs a product distribution $\Gamma \defeq \{\mu_{ip}\}_{i \in [N], p \in [k]} \in \calP(k,V)$ satisfying that for all $i \in [N]$ and for all $w \in [k]$

\[\sum_{j \in N(i)} \sum_{p,q \in [k]^2}f_{ij}(p,q)\E_\Gamma[a_{ip}]\E_\Gamma[a_{jq}] - \sum_{j \in N(i)} \sum_{q \in [k]}f_{ij}(w,q)\E_\Gamma[a_{jq}] \geq -\epsilon\]
That is to say $\Gamma$ is an $\epsilon$-approximate mixed nash equilibrium
\end{lemma}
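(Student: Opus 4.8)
The plan is to obtain \pref{lem:best-response-main} by chaining the two rounding guarantees already in place, \pref{lem:rounding-main} and \pref{lem:best-response-pursuit}, and then cleaning up constants. Start from $\pE_{\zeta}$, the output of \pref{alg:convex}, and run the conditioning procedure of \pref{lem:rounding-main} on it. That lemma says that, \emph{in expectation} over the random conditioning sequence $(i_{1},p_{i_{1}}),\dots,(i_{r},p_{i_{r}})$ drawn from $\mathcal U$, the conditioned local correlation satisfies $\E[\Delta_{\hat{\zeta}}\mid a_{i_{1}}=p_{i_{1}},\dots,a_{i_{r}}=p_{i_{r}}]\le\epsilon^{2}/L^{2}\le\epsilon^{2}$ (using $L\ge 1$). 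Since $\Delta_{\hat{\zeta}}\ge 0$, some realization of the sequence achieves $\Delta_{\hat{\zeta}}\le\epsilon^{2}$ deterministically; because the sequence has length $r=O(\ell)$ with $\ell=L^{4}\ln(k)/\epsilon^{4}$ and there are only $(Nk)^{O(\ell)}$ such sequences, the algorithm can enumerate all of them and keep one whose conditioned pseudoexpectation has local correlation at most $\epsilon^{2}$, staying inside the $(Nk)^{O(\ell)}$ runtime of \pref{lem:runtime}. Call the selected object $\pE_{\hat{\zeta}}$ and set $\Gamma_{1}\defeq\{\pE_{\hat{\zeta}}[a_{ip}]\}_{i\in[N],p\in[k]}$.

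\textbf{Composing the lemmas.} Next I would verify that $\pE_{\hat{\zeta}}$ is a legitimate input to \pref{lem:best-response-pursuit}. By the Conditioning fact, each step $\pE[\,\cdot\,]\mapsto\pE[\,\cdot\mid a_{i}=p_{i}]$ is well defined on a booleanity-constrained pseudoexpectation and preserves normalization, the sum-of-squares constraint, the booleanity and coloring identities, and --- as noted right after that fact --- the conditioning-preserving correlated-equilibrium constraint for every player whose own action is not among those conditioned upon. The cost is one degree per conditioning, and the degree $\ell$ of \pref{alg:convex} is chosen so that $\pE_{\hat{\zeta}}$ still has degree at least $2$ and still carries the base ($S_{i}=\emptyset$) version of each of these constraints, which is all that \pref{lem:best-response-pursuit} inspects: the marginals $\pE_{\hat{\zeta}}[a_{ip}]$, the second moments $\pE_{\hat{\zeta}}[a_{ip}a_{jq}]$ and hence $\Delta_{\hat{\zeta}}$ and the set $K_{1}$, and the base correlated-equilibrium and booleanity/coloring relations. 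Hence \pref{lem:best-response-pursuit} applies with $\pE_{\zeta}\gets\pE_{\hat{\zeta}}$, $\Delta_{\zeta}=\Delta_{\hat{\zeta}}\le\epsilon^{2}$, and $\Gamma_{1}$ as above; the Flag/BestRespond iteration it analyzes is exactly BestResponsePursuit($\pE_{\hat{\zeta}}$), which therefore outputs a product distribution $\Gamma=\Gamma_{T}$ that, for all $i\in[N]$ and $w\in[k]$ and with probability $1-N^{-\ln^{2}N}$ over the graph, is a $120L\epsilon$-approximate mixed Nash equilibrium, that is
\[ \sum_{j\in N(i)}\sum_{p,q\in[k]^{2}}f_{ij}(p,q)\,\E_{\Gamma}[a_{ip}]\,\E_{\Gamma}[a_{jq}]\;-\;\sum_{j\in N(i)}\sum_{q\in[k]}f_{ij}(w,q)\,\E_{\Gamma}[a_{jq}]\;\ge\;-120L\epsilon . \]
Finally, $\epsilon$ enters above only through $\ell$ and the edge-density threshold, so running the whole pipeline with the target accuracy replaced by a suitable constant multiple of $\epsilon/L$ turns the slack $-120L\epsilon$ into $-\epsilon$ while changing $\ell$, the density, and the runtime only by $\poly(L)$ factors absorbed into the stated bounds; this yields the claimed $\epsilon$-approximate mixed Nash equilibrium.

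\textbf{Main obstacle.} I expect the real content to lie in the admissibility check above --- confirming that conditioning our \emph{non-standard} relaxation does not silently break a hypothesis of \pref{lem:best-response-pursuit}. Two points need care. First, the degree bookkeeping: the correlated-equilibrium constraint for player $i$ is imposed only for multilinear products in the variables of players $j\ne i$ and only up to subset size $\ell-2$, so one must check that a length-$r$ conditioning, which always conditions on entire players' action vectors, consumes exactly $r$ of those slots for each non-seed player, leaving its constraint intact at residual degree $\ell-r\ge 2$. Second, the $r=O(\ell)$ players whose actions were conditioned upon (``seeds'') have deterministic marginals, hence $\Delta_{\hat{\zeta}}(i)=0$ and $i\notin K_{1}$, yet they do \emph{not} retain their equilibrium constraint; one must check that \pref{lem:best-response-pursuit} (whose conclusion is stated for all $i\in V$) accommodates them --- e.g.\ by also setting each seed to best-respond to $\Gamma_{T}$, which on a $G(N,p)$ graph with $d=\poly(k,1/\epsilon,\ln N)$ perturbs every player's payoff by only $O(\ell L/d)\ll\epsilon$, since one player influences another's payoff by at most $L/d$. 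Everything else --- the averaging argument for a good conditioning, the enumeration that finds it, and the constant/$L$ rescaling --- is routine, and the high-probability qualifier is inherited directly from \pref{lem:best-response-pursuit}.
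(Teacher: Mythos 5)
Your proof is correct and follows essentially the same route as the paper: the paper's own argument is exactly the two-step composition of \pref{lem:rounding-main} (local correlation at most $\epsilon^2/L^2$) with \pref{lem:best-response-pursuit}, with the constant/$L$ rescaling left implicit. Your additional checks (enumerating conditionings to fix a good seed sequence, the degree bookkeeping for the conditioned relaxation, and best-responding the seed players) are details the paper only gestures at elsewhere, and they are handled correctly.
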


\begin{proof}
Using \pref{lem:rounding-main} we obtain $\pE_{\hat{\zeta}}$ satisfying local correlation $\Delta \leq \frac{\epsilon^2}{L^2}$.  Then using \pref{lem:best-response-pursuit} we obtain a product distribution $\Gamma$ satisfying  

\[\sum_{j \in N(i)} \sum_{p,q \in [k]^2}f_{ij}(p,q)\E_\Gamma[a_{ip}]\E_\Gamma[a_{jq}] - \sum_{j \in N(i)} \sum_{q \in [k]}f_{ij}(w,q)\E_\Gamma[a_{jq}] \geq -\epsilon\]
as desired.  
\end{proof}
We now have all the ingredients to prove \pref{thm:main-theorem}
\maintheorem*

\begin{proof}
We have by \pref{lem:runtime} that \pref{alg:convex} can be solved in $(Nk)^{O(\frac{L^4 \ln(k)}{\epsilon^4})}$ time.  We have by \pref{lem:best-response-main} that the conditioning procedure and \pref{alg:brp} produce an $\epsilon$-approximate Nash equilibrium with high probability $1 - N^{-\ln^2(N)}$.  The runtime of both procedures is upper bounded by the time it takes to exhaustively search over all $\ell-2$ size subsets of conditionings of $\bold{a}$.  This can be done in time $(Nk)^{O(\frac{L^4 \ln(k)}{\epsilon^4})}$.  
\end{proof}

\begin{algorithm}
\caption{Best Response Pursuit}\label{alg:brp}
\textbf{Input: } $\pE_{\zeta}$\;
\KwResult{$\Gamma = \{\{\mu_{ip}\}_{p=1}^k\}_{i=1}^N$}
$\mu_{ip} \gets \pE[a_{ip}] \text{   for all }  i \in [N] \text{ and for all } p \in [k]$\;
$K_1 \gets \emptyset$\;
\For{$i \in [N]$}
{
\If{
$\Delta(i) \geq 2\epsilon$}{$K_1 \gets K_1 \cup \{i\}$}
}
$\Gamma_1 \gets \{\{\mu_{ip}\}_{p=1}^k\}_{i \in V/K_1}$ \;
\For{$t \in [T]$}{
$W = Flag(K_t,G)$\;
$(\Gamma_{t+1}, K_{t+1}) \gets BestRespond(W, \Gamma_t)$\;
$\Gamma \gets \Gamma_{t+1}$\;
\If{$K_{t+1} = \emptyset$}{ \textbf{break}} 
}

$\textbf{Return: } \Gamma$
\end{algorithm}

\begin{algorithm}
\caption{Flag}\label{alg:flag}
\KwData{$K \subseteq V$}
\KwResult{$W \subseteq V$}
$W \gets K$\;
$H_{old} \gets K$\;
\While{$|H_{old}| \geq \ln^3(N)$}{
$H_{new} \gets \emptyset$\;
\For{$i \in \overline{W}$}{
\If{$\frac{|N_{H_{old}}(i)|}{d} \geq \max(10 \frac{|H_{old}|}{N}, 10\frac{\ln(n)}{d})$}{
$H_{new} \gets H_{new} \cup \{i\}$\;
}
}
$W \gets W \cup H_{new}$\;
$H_{old} \gets H_{new}$
}

$\textbf{Return: } W$

\end{algorithm}

\begin{algorithm}
\caption{BestResponse($\Gamma,W$)}\label{alg:bestrespond}
\textbf{Input: }$\Gamma = \{\{\mu_{ip}\}_{p=1}^k\}_{i \in V/W}, W$\;
\KwResult{$\hat{\Gamma} = \{\{\hat{\mu}_{ip}\}_{p=1}^k\}_{i \in V/Q}, Q$}
$\{\hat{\mu}_{ip}\}_{p=1}^k\}_{i \in V/W} \gets  \{\{\mu_{ip}\}_{p=1}^k\}_{i \in V/W}$ \;
\If{$|W| \leq \ln^3(N)$}{
$Q \gets \emptyset$
} \Else{
$Q \gets \emptyset$\;
\For{$i \in W$}{
\If{$\frac{|N_{W}(i)|}{|N(i)|} \geq \epsilon$}{
$Q \gets \{i\}$\;
} 
}
}

\For{$i \in W/Q$}{
\For{$p \in [k]$}{
\If{$\E_{\Gamma}[\sum_{j \in N_{\overline{W}}(i)}\sum_{q\in [k]} f_{i,j}(p,q)a_{jq}] \geq \max_{p' \in [k]} \E[\sum_{j \in N_{\overline{W}}(i)}\sum_{q\in [k]} f_{i,j}(p',q)a_{jq}]$}{
$r_{ip} \gets 1$\;
} \Else{
$r_{ip} \gets 0$
}
}
}
$\hat{\mu}_{ip} \gets \frac{r_{ip}}{\sum_{p \in [k]}r_{ip}} $ \text{ for all } $i \in W/Q$ \text{ and for all } $p \in [k]$\\
$\textbf{return: } (\{\{\hat{\mu}_{ip}\}_{p=1}^k\}_{i \in V/Q},Q)$\;
\end{algorithm}

\section{Propagation of Mutual Information}\label{sec:mutual-information}
In the next two sections we work towards a proof of \pref{lem:rounding-main}.  First we present an upper bound on the local correlation $\Delta_{\zeta}$ in terms of the global mutual information defined to be $\frac{1}{N^2}\sum_{i,j \in [N]^2}I(a_i,a_j)$.  Here $\{a_i\}_{i \in [N]}$ are the $\ell$-local random variables over the agent's actions provided by the level $\ell$ sherali-adams hierarchy on the booleanity constraints.  The pairwise mutual information $I(a_i,a_j)$ is therefore well defined.  Once we relate $\Delta_{\zeta}$ to  $\frac{1}{N^2}\sum_{i,j \in [N]^2}I(a_i,a_j)$ we can use a standard conditioning argument to show that after a small number of rounds of conditioning $\frac{1}{N^2}\sum_{i,j \in [N]^2}I(a_i,a_j)$ is small.  This would in turn show that $\Delta_{\zeta}$ is small.   

\begin{lemma} \label{lem:loc_to_glob}
Let $\Delta_{\zeta}$ denote the local correlation of pseudodistribution $\pE_\zeta$ on a $(\rho,\tau)$ threshold rank graph $G$ defined as
\[\Delta_{\zeta} \defeq \frac{1}{2|E|}\sum_{(i,j) \in E} \sum_{(p,q) \in [k]^2} |\pE[a_{ip} a_{jq}] - \pE[a_{ip}]\pE[a_{jq}]|]\]
Let $f(x) = \sum_{r=1}^p \alpha_r x^r$ be a univariate polynomial satisfying $|x| \leq f(x) \leq |x| + \epsilon'$ for $x \in [-1,1]$.  Let $H \in \R^{kN \times kN}$ be a matrix row indexed by $i \in [N]$ and $p \in [k]$ and column indexed by $j \in [N]$ and $q \in [k]$.  We define $H_{\{(i,p),(j,q)\}} = f(\pE[a_{ip}a_{jq}] - \pE[a_{ip}]\pE[a_{jq}])$.  
Then we have that 
\[\Delta_{\zeta}  \leq
\frac{\rho k}{N} \|H\|_* + \tau \sqrt{2(\epsilon'k^2)^2 + \frac{2}{N^2}\sum_{i,j \in [N]^2}I(a_i,a_j)} \]
\end{lemma}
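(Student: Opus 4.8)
The plan is to decompose the local correlation into a "spectral" part, controlled by the threshold rank, and a "mutual information" part, controlled by the small set of large eigendirections. The starting point is the observation that
\[
\Delta_{\zeta} = \frac{1}{2|E|}\sum_{(i,j)\in E}\sum_{(p,q)\in[k]^2}\bigl|\pE[a_{ip}a_{jq}]-\pE[a_{ip}]\pE[a_{jq}]\bigr|,
\]
and that since $|x|\le f(x)$ on $[-1,1]$ and each pseudo-covariance lies in $[-1,1]$, we can replace each absolute value $|{\rm Cov}(a_{ip},a_{jq})|$ by $H_{\{(i,p),(j,q)\}} = f({\rm Cov}(a_{ip},a_{jq}))$ at the cost of a one-sided inequality. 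Writing $\frac{1}{2|E|}\sum_{(i,j)\in E} = \frac{1}{dN}\sum_{(i,j)\in E}$ and introducing the (normalized) adjacency matrix $\frac1d A$, the sum over edges becomes, up to the $k\times k$ block structure, an inner product $\langle \frac1d A \otimes J_k, H\rangle$ where $J_k$ is the all-ones $k\times k$ matrix; more precisely $\Delta_\zeta \le \frac{1}{N}\sum_{i,j}(\frac1d A)_{ij}\sum_{p,q}H_{\{(i,p),(j,q)\}}$.

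First I would split $\frac1d A = P_{\ge\rho} + P_{<\rho}$ into the projection onto eigenvalues at least $\rho$ (rank $\tau$) and the rest (spectral norm $<\rho$). For the small-eigenvalue part, bound $\langle P_{<\rho}\otimes J_k, H\rangle \le \|P_{<\rho}\otimes J_k\|_{op}\,\|H\|_*$; since $\|P_{<\rho}\|_{op}\le\rho$ and $\|J_k\|_{op}=k$, this contributes at most $\rho k\|H\|_*$, and after the $\frac1N$ normalization we get the $\frac{\rho k}{N}\|H\|_*$ term. For the large-eigenvalue part, since $P_{\ge\rho}$ has rank $\tau$ with $\|P_{\ge\rho}\|_{op}\le 1$, a Cauchy–Schwarz / rank argument gives $\langle P_{\ge\rho}\otimes J_k, H\rangle \le \tau \cdot \| \text{(restriction of }H\text{ to the relevant subspace)}\|_F$ type bound, so it suffices to control the Frobenius norm (equivalently the sum of squares of entries) of $H$, i.e. $\sum_{i,j}\bigl(\sum_{p,q}H_{\{(i,p),(j,q)\}}\bigr)^2 \lesssim \tau^2\cdot\frac1{N^2}\sum_{i,j}(\cdots)$. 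Here I would use $f(x)^2 \le (|x|+\epsilon')^2 \le 2x^2 + 2\epsilon'^2$ entrywise, summed over the $k^2$ pairs $(p,q)$, to replace $H$-entries by squared covariances ${\rm Cov}(a_{ip},a_{jq})^2$ plus an additive $2(\epsilon' k^2)^2$ slack per node pair. Finally, I would invoke Pinsker's inequality in the form $\sum_{p,q}{\rm Cov}(a_{ip},a_{jq})^2 \le \|a_i\otimes a_j - a_i\times a_j\|_{TV}^2 \lesssim I(a_i;a_j)$ (the pairwise mutual information is well-defined because the degree-$\ell$ Sherali–Adams relaxation of the booleanity constraints yields genuine $\ell$-local distributions on pairs), which converts the sum of squared covariances into $\frac{2}{N^2}\sum_{i,j}I(a_i;a_j)$. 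Collecting the two pieces and pulling the square root out of the additive-slack-plus-information sum gives exactly the claimed bound.

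The main obstacle I expect is the precise spectral bookkeeping in the large-eigenvalue term: carefully justifying that projecting onto a rank-$\tau$ subspace and then bounding by an $\ell_2$/Frobenius quantity loses only a factor of $\tau$ (not $\sqrt\tau$ or $\tau$-times-something-else), and correctly handling the tensor-with-$J_k$ structure so that the $k$ factors land where the statement says they do (a bare $k$ in the spectral term, $k^2$ inside the slack, and no extra $k$ on the information term because Pinsker is applied to the joint distribution of $(a_i,a_j)$ as $[k]$-valued variables rather than coordinatewise). A secondary subtlety is making sure the one-sided replacement $|x|\mapsto f(x)$ is applied in the correct direction at each use — once to pass from $\Delta_\zeta$ to an expression in $H$ (needs $|x|\le f(x)$), and once to bound $f(x)^2$ from above (needs $f(x)\le|x|+\epsilon'$) — and that the hypothesis $|x|\le f(x)\le|x|+\epsilon'$ on $[-1,1]$ indeed licenses both. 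The Pinsker step and the edge-sum-to-matrix-inner-product rewriting are routine once the framework is set up.
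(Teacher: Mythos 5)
Your proposal follows essentially the same route as the paper's proof: the entrywise replacement of $|x|$ by $f(x)$, a spectral split of the normalized adjacency matrix at threshold $\rho$, a nuclear-norm-times-operator-norm (trace duality) bound for the small-eigenvalue part yielding $\frac{\rho k}{N}\|H\|_*$ — which is just a cleaner packaging of the paper's explicit expansion in the eigenbases of $\frac{1}{d}A$ and $H$ — and a rank-$\tau$ Cauchy–Schwarz bound for the large-eigenvalue part, followed by the $(a+b)^2 \leq 2a^2 + 2b^2$ split against the $\epsilon' k^2$ entrywise approximation error and Pinsker's inequality to reach the mutual-information term. The only detail to tighten is that the quantity fed into Pinsker must be the square of the $(p,q)$-sum, $\bigl(\sum_{(p,q)\in[k]^2}|\mathrm{cov}(a_{ip},a_{jq})|\bigr)^2 \leq 2I(a_i,a_j)$, rather than the sum of squared covariances, which the total-variation formulation you invoke indeed delivers.
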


\begin{proof}
We begin with the local correlation 
\[\Delta_\zeta \defeq \frac{1}{2|E|}\sum_{(i,j) \in E} \sum_{(p,q) \in [k]^2} |\pE[a_{ip}a_{jq}] - \pE[a_{ip}]\pE[a_{jq}]|] = \frac{1}{|E|}\langle A, K\rangle  \]
where $K_{ij} := \sum_{(p,q) \in [k]^2} |\pE[a_{ip} a_{jq}] - \pE[a_{ip}]\pE[a_{jq}]|$.  Let $M$ be a matrix such that $M_{ij} = \sum_{(p,q) \in [k]^2}f(\pE[a_{ip}a_{jq}] - \pE[a_{ip}]\pE[a_{jq}])$ for univariate polynomial $f$ satisfying $|x| \leq f(x) \leq |x| + \epsilon'$.  Here $f(x)$ is the polynomial defined in \pref{lem:jackson}.  We use the univariate polynomial entrywise upper bound to obtain 
\[\frac{1}{|E|}\langle A, K \rangle  \leq \frac{1}{|E|}\langle A, M\rangle\]
Now we split the eigenvalues of $G$ into those larger than $\rho$ and those less than $\rho$.  

\[ \frac{1}{|E|}\langle A, M \rangle  = \frac{1}{N}\langle \frac{N}{|E|}A, M \rangle = \frac{1}{N}\sum_{i=1}^N \lambda_i \langle v_iv_i^T, M \rangle = \frac{1}{N}(\sum_{|\lambda_i| \leq \rho} \lambda_i \langle v_iv_i^T, M \rangle + \sum_{|\lambda_i| \geq \rho} \lambda_i \langle v_iv_i^T, M \rangle) \]
Taking absolute values in the second term we upper bound by 
\[ \leq \frac{1}{N}\Big(\sum_{|\lambda_i| \leq \rho} \lambda_i \langle v_iv_i^T, M \rangle + \sum_{|\lambda_i| \geq \rho} |\lambda_i| |\langle v_iv_i^T, M \rangle|\Big) \]
Using the fact that the eigenvalues of the normalized adjacency matrix is upper bounded by $1$ and taking cauchy schwarz we obtain 
\[ \leq \frac{1}{N}\Big(\sum_{|\lambda_i| \leq \rho} \lambda_i \langle v_iv_i^T, M \rangle + \sum_{|\lambda_i| \geq \rho} (\sum_{(r,s) \in [N]^2} v_{ir}^2v_{is}^2)^{1/2} \| M\|_F\Big) \]

\[ = \frac{1}{N}\Big(\sum_{|\lambda_i| \leq \rho} \lambda_i \langle v_iv_i^T, M \rangle + \sum_{|\lambda_i| \geq \rho} (\sum_{r=1}^{N} v_{ir}^2) \| M\|_F\Big) \]

Using the fact that $\|v_i\|^2 = 1$ we obtain 
\[ = \frac{1}{N}\Big(\sum_{|\lambda_i| \leq \rho} \lambda_i \langle v_iv_i^T, M \rangle + \sum_{|\lambda_i| \geq \rho} \| M\|_F\Big) \]
Using the $(\rho,\tau)$ threshold rank condition we obtain.  
\[ = \frac{1}{N}\Big(\sum_{|\lambda_i| \leq \rho} \lambda_i \langle v_iv_i^T, M \rangle + \tau \sqrt{\sum_{i,j \in [N]^2}(M_{ij})^2}\Big) \]
Let $v^{\oplus k} = (v,v,...,v) \in \R^k$.  Let $\overline{v}_i := (v_{i1}^{\oplus k}, v_{i2}^{\oplus k}, ... , v_{iN}^{\oplus k}) \in \R^{kN}$.  Let $H \in \R^{kN \times kN}$ be a matrix row indexed by $i \in [N]$ and $p \in [k]$ and column indexed by $j \in [N]$ and $q \in [k]$.  We define $H_{\{(i,p),(j,q)\}} = f(\pE[a_{ip}a_{jq}] - \pE[a_{ip}]\pE[a_{jq}])$.  Let $\phi_1,...,\phi_{kN}$ be the eigenvalues of $H$ with corresponding eigenvectors $q_1,...,q_{kN}$.  Then we have 

\[=  \frac{1}{N}\Big(\sum_{|\lambda_i| \leq \rho} \lambda_i \langle \overline{v}_i\overline{v}_i^T, H \rangle + \tau \sqrt{\sum_{i,j \in [N]^2}(M_{ij})^2}\Big) \]

\[=  \frac{1}{N}\Big(\sum_{|\lambda_i| \leq \rho} \lambda_i \langle \overline{v}_i\overline{v}_i^T, \sum_{j=1}^{kN} \phi_j q_jq_j^T \rangle + \tau \sqrt{\sum_{i,j \in [N]^2}(M_{ij})^2}\Big) \]

\[=  \frac{1}{N}\Big(\sum_{j=1}^{kN} \phi_j \sum_{|\lambda_i| \leq \rho} \lambda_i \langle \overline{v}_i\overline{v}_i^T, q_jq_j^T \rangle + \tau \sqrt{\sum_{i,j \in [N]^2}(M_{ij})^2}\Big) \]
Applying absolute values and noting that $\langle \overline{v}_i\overline{v}_i^T, q_jq_j^T \rangle$ is a square we upper bound by 
\[\leq  \frac{1}{N}\Big(\sum_{j=1}^{kN} |\phi_j| \sum_{|\lambda_i| \leq \rho} |\lambda_i| \langle \overline{v}_i\overline{v}_i^T, q_jq_j^T \rangle + \tau \sqrt{\sum_{i,j \in [N]^2}(M_{ij})^2}\Big) \]
Upper bounding $\lambda_i \leq \rho$ we obtain 
\[\leq  \frac{1}{N}\Big(\sum_{j=1}^{kN} |\phi_j| \sum_{|\lambda_i| \leq \rho} \rho k \langle \frac{1}{\sqrt{k}}\overline{v}_i \frac{1}{\sqrt{k}}\overline{v}_i^T, q_jq_j^T \rangle + \tau \sqrt{\sum_{i,j \in [N]^2}(M_{ij})^2}\Big) \]

Using the fact that $\langle \frac{1}{\sqrt{k}}\overline{v}_i \frac{1}{\sqrt{k}}\overline{v}_i^T, q_jq_j^T \rangle$ is a square for all $i \in [N]$ we upper bound by 

\[\leq  \frac{1}{N}\sum_{j=1}^{kN} |\phi_j|  \rho k\sum_{i \in [N]} \langle \frac{1}{\sqrt{k}}\overline{v}_i \frac{1}{\sqrt{k}}\overline{v}_i^T, q_jq_j^T \rangle + \frac{\tau}{N} \sqrt{\sum_{i,j \in [N]^2}(M_{ij})^2} \]

\[= \frac{1}{N}\sum_{j=1}^{kN} |\phi_j|  \rho k\sum_{i \in [N]} \langle \frac{1}{\sqrt{k}} \overline{v_i},q_j\rangle^2 + \frac{\tau}{N} \sqrt{\sum_{i,j \in [N]^2}(M_{ij})^2} \]

We know that $\frac{1}{\sqrt{k}}\overline{v}_i$ for $i \in [N]$ forms an orthonormal basis, and $\|q_j\|^2 = 1$.  Therefore, we upper bound the first term by   

\[\leq \frac{1}{N}\sum_{j=1}^{kN} |\phi_j|  \rho k + \frac{\tau}{N} \sqrt{\sum_{i,j \in [N]^2}(M_{ij})^2} \]

\[=  \frac{1}{N}\sum_{j=1}^{kN} |\phi_j|\rho k + \frac{\tau}{N} \sqrt{\sum_{i,j \in [N]^2}(M_{ij} - \sum_{(p,q) \in [k]^2}|cov(a_{ip}, a_{jq})| + \sum_{(p,q) \in [k]^2}|cov(a_{ip}, a_{jq})|)^2}\]

Applying $(a+b)^2 \leq 2a^2 + 2b^2$ to the second term we obtain

\[\leq  \frac{1}{N}\sum_{j=1}^{kN} |\phi_j|\rho k + \tau \sqrt{\frac{1}{N^2}\sum_{i,j \in [N]^2}\big(2(M_{ij} - \sum_{(p,q) \in [k]^2}|cov(a_{ip}, a_{jq})|)^2 + 2(\sum_{(p,q) \in [k]^2}|cov(a_{ip}, a_{jq})|)^2}) \]

Using the fact that $M$ is entrywise $\epsilon'k^2$ close to $K$ we have  

\[\leq  \frac{\rho k}{N} \|H\|_* + \tau \sqrt{2(\epsilon'k^2)^2 + \frac{2}{N^2}\sum_{i,j \in [N]^2}(\sum_{(a,b) \in [k]^2}|cov(a_{ip}, a_{jq})|)^2} \]
Applying pinsker's inequality we obtain 
\[\leq  \frac{\rho k}{N} \|H\|_* + \tau \sqrt{2(\epsilon'k^2)^2 + \frac{4}{N^2}\sum_{i,j \in [N]^2}I(a_i,a_j)} \]
which is the desired inequality.  
\end{proof}

Now we have upper bounded $\Delta_{\zeta}$ by a relation involving $\frac{4}{N^2}\sum_{i,j \in [N]^2}I(a_i,a_j)$ and $\|H\|_*$.  In the next section we bound $\|H\|_*$.  
\section{Accelerated Rounding via Polynomial Approximation}
The following lemma is a self contained fact about the nuclear norm of matrices that are written as the sum of PSD matrices.    

\begin{lemma} \label{lem:nuclear_norm}
Let $f(x) = \sum_{r=1}^p \alpha_r x^r$.  Let $H \in \R^{kN \times kN}$ be a matrix row indexed by $i \in [N]$ and $p \in [k]$ and column indexed by $j \in [N]$ and $q \in [k]$.  We define $H_{\{(i,p),(j,q)\}} = f(\pE[a_{ip}a_{jq}] - \pE[a_{ip}]\pE[a_{jq}])$.  Then 
\[\|H\|_* \leq Nk\sum_{r=1}^p |\alpha_r|\]
In particular, for $f(x)$ defined in \pref{lem:jackson} where $|x| \leq f(x) \leq |x| + \epsilon'$
\[\|H\|_* \leq \frac{18Nk}{\sqrt{\pi}\epsilon'}\]
\end{lemma}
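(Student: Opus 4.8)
The plan is to prove the two bounds in sequence, with the first being the engine for the second. For the general bound $\|H\|_* \leq Nk\sum_{r=1}^p |\alpha_r|$, I would first observe that $H$ decomposes additively as $H = \sum_{r=1}^p \alpha_r H^{(r)}$, where $H^{(r)}_{\{(i,p),(j,q)\}} = (\pE[a_{ip}a_{jq}] - \pE[a_{ip}]\pE[a_{jq}])^r$. By the triangle inequality for the nuclear norm, $\|H\|_* \leq \sum_{r=1}^p |\alpha_r| \, \|H^{(r)}\|_*$, so it suffices to show $\|H^{(r)}\|_* \leq Nk$ for every $r$. The key structural fact is that the matrix $C$ with entries $C_{\{(i,p),(j,q)\}} = \pE[a_{ip}a_{jq}] - \pE[a_{ip}]\pE[a_{jq}]$ is PSD: it is the pseudo-covariance (Gram) matrix of the centered indeterminates $a_{ip} - \pE[a_{ip}]$ under the SoS pseudoexpectation, which is PSD by the degree-$\ell$ sum-of-squares constraint (this needs $\ell \geq 2$, which holds). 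Since $C$ is PSD, each entrywise (Hadamard) power $C^{\circ r}$ is also PSD by the Schur product theorem, and for a PSD matrix the nuclear norm equals the trace. So $\|H^{(r)}\|_* = \mathrm{tr}(C^{\circ r}) = \sum_{(i,p)} (\pE[a_{ip}^2] - \pE[a_{ip}]^2)^r$. Using booleanity, $\pE[a_{ip}^2] = \pE[a_{ip}] =: \mu_{ip} \in [0,1]$ (the $[0,1]$ range itself follows from SoS: $\pE[a_{ip}^2]\geq 0$ and $\pE[(1-a_{ip})a_{ip}] = 0$ combined with $\pE[(1-a_{ip})^2]\geq 0$), so each diagonal entry $\mu_{ip} - \mu_{ip}^2 \in [0, 1/4] \subseteq [0,1]$, giving $\mathrm{tr}(C^{\circ r}) \leq Nk$. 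This yields the first bound.

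For the second bound, I would simply plug in $f$ from \pref{lem:jackson}, which is the degree-$p$ polynomial approximation of $|x|$ on $[-1,1]$ with error $\epsilon'$. The general bound gives $\|H\|_* \leq Nk \sum_{r=1}^p |\alpha_r|$, so I need a bound on the coefficient $\ell_1$-norm $\sum_r |\alpha_r|$ of the Jackson-type approximant. I expect \pref{lem:jackson} (stated later in the paper) to record that this polynomial has $\sum_r |\alpha_r| \leq \frac{18}{\sqrt{\pi}\,\epsilon'}$, or equivalently a degree/coefficient trade-off from which this follows; substituting gives $\|H\|_* \leq \frac{18Nk}{\sqrt{\pi}\,\epsilon'}$ directly.

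The main obstacle — and the only genuinely non-routine point — is justifying that $C$ is PSD \emph{and} that Hadamard powers preserve this, in the pseudoexpectation setting rather than for an honest distribution. The Schur product theorem applies verbatim to any real PSD matrix, so once $C \succeq 0$ is established the rest is mechanical; the subtlety is purely that $C$ here is built from $\pE_\zeta$, so I must invoke the sum-of-squares constraint (item 5 of \pref{alg:convex}) to conclude $v^\top C v = \pE_\zeta\big[(\sum_{i,p} v_{ip}(a_{ip} - \pE[a_{ip}]))^2\big] \geq 0$, noting the squared linear form has degree $2 \leq \ell$. A secondary bookkeeping point is confirming that the centered entries lie in $[-1,1]$ so that the entrywise bound $|x| \leq f(x) \leq |x| + \epsilon'$ used upstream in \pref{lem:loc_to_glob} is applicable; this again follows from $\mu_{ip} \in [0,1]$ and Cauchy–Schwarz for pseudoexpectations, $|\pE[a_{ip}a_{jq}]| \leq \sqrt{\pE[a_{ip}^2]\pE[a_{jq}^2]} \leq 1$. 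I would state these two PSD/range facts as a short preliminary observation and then let the trace computation and triangle inequality finish the proof.
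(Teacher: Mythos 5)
Your proposal is correct and follows essentially the same route as the paper: decompose $H$ into Hadamard powers of the pseudo-covariance matrix, use PSD-ness (the paper via Gram vectors $v_{ip}$ and tensor powers, you via the SoS constraint plus the Schur product theorem, which the paper also cites) so that the nuclear norm of each power equals its trace, bound the trace by $Nk$, apply the triangle inequality, and finish with the coefficient bound from \pref{lem:jackson}. Your explicit justification that the centered Gram matrix is PSD under $\pE_\zeta$ and that the diagonal entries lie in $[0,1]$ via booleanity is a slightly more careful rendering of steps the paper asserts implicitly, but it is not a different argument.
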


\begin{proof}
Let $Q_{\{(ip),(jq)\}} = cov(a_{ip},a_{jq}) = \langle v_{ip}, v_{jq} \rangle$.  Let $Q^{\odot r}$ be the entrywise $r$'th power of $Q$.  We know 

\[Q^{\odot r}_{\{(ip),(jq)\}} \defeq cov(a_{ip},a_{jq})^r = \langle v_{ip}, v_{jq} \rangle^r =  \langle v_{ip}^{\otimes r}, v_{jq}^{\otimes r}\rangle\]

Thus we have $Q^{\odot r}$ is positive semidefinite, which can also be seen by applying the Schur product theorem.  Thus we can upper bound the nuclear norm 
\[\|Q^{\odot r}\|_* = Tr(Q^{\odot r}) =  \sum_{i \in [N], p \in [k]} \|v_{ip}\|^{2r} \leq Nk\] 

where the inequality follows from the fact that $\langle v_{ip}, v_{jq} \rangle \leq 1$. This implies we can upper bound the nuclear norm of $H$ as follows 
 \[\| H \|_* = \| \sum_{r=1}^p \alpha_r Q^{\odot r} \|_* \leq \sum_{r=1}^p \|\alpha_r Q^{\odot r} \|_* = \sum_{r=1}^p |\alpha_r|\| Q^{\odot r} \|_* \leq Nk\sum_{r=1}^p |\alpha_r|\]
 Where the first inequality follows from triangle inequality, and the second inequality follows from the nonnegativity of the nuclear norm. Using \pref{lem:jackson} we upper bound by $\sum_{r=1}^p |\alpha_r| \leq \frac{18}{\sqrt{\pi}}p \leq \frac{18}{\sqrt{\pi}\epsilon'}$ which allows us to conclude
\[\| H \|_* \leq \frac{18Nk}{\sqrt{\pi}\epsilon'} \] 
as desired.  
\end{proof}

Finally we are ready to prove \pref{lem:rounding-main}
\begin{lemma} \label{lem:rounding-main}
Let $a_1,a_2,...,a_N$ be random variables such that $\mathbb{P}(a_i = p) = \pE[a_{ip}]$ for all $p \in [k]$ and for all $i \in [N]$.  Consider the following procedure, 
\begin{enumerate}
    \item Initialize seed set $S = \emptyset$, and assignments $\cal{R} = \emptyset$
    \item Let $i \in V/S$ be drawn uniformly at random, and append $i$ to $S$ 
    \item draw $p_i \sim a_i$ and append $p_i$ to $\cal{R}$
    \item While $|S| \leq \frac{\ln(k)}{L^4\epsilon^4}$ go to step (2.)
\end{enumerate}
When the procedure terminates we have $S \defeq \{i_1,i_2,...,i_{r}\}$ and $R \defeq \{p_{i_1}, p_{i_2}, ..., p_{i_r}\}$. Let $\cal{U}$ be the distribution over $\{(i_y,p_y)\}_{y \in [r]}$ induced by the procedure.  Then 
$$\E_{\{(i_y,p_y)\}_{y \in [r].  } \sim \cal{U}}[\Delta|a_{i_1} = p_{i_1}, a_{i_2} = p_{i_2} ,...,a_{i_r} = p_{i_r}] \leq \frac{\epsilon^2}{L^2}$$  

\end{lemma}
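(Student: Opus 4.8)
The plan is to chain together the two preceding lemmas with the standard ``propagation of mutual information'' argument, instantiated at the threshold-rank parameters of $G(N,p)$. Using the stated $(\tfrac{1}{\sqrt d},1)$ threshold rank of the random graph, I would apply \pref{lem:loc_to_glob} with $\rho=\tfrac1{\sqrt d}$ and $\tau=1$, taking $f$ to be the degree-$p$ polynomial of \pref{lem:jackson} with $|x|\le f(x)\le|x|+\epsilon'$; feeding in the estimate $\|H\|_*\le\tfrac{18Nk}{\sqrt\pi\,\epsilon'}$ of \pref{lem:nuclear_norm} then gives, for any pseudoexpectation consistent with the booleanity Sherali--Adams constraints,
\[
\Delta_\zeta\;\le\;\frac{18k^2}{\sqrt{\pi d}\,\epsilon'}\;+\;\sqrt{\,2(\epsilon' k^2)^2\;+\;\frac{4}{N^2}\sum_{i,j\in[N]^2}I(a_i,a_j)\,}.
\]
Here $\epsilon'$ is a free parameter, and I would set $\epsilon'\asymp\tfrac{\epsilon^2}{L^2k^2}$: using $d=\tfrac{L^8k^8}{\epsilon^8}\ln^3N$ this forces the first term to $O\!\big(\tfrac{\epsilon^2}{L^2\ln^{3/2}N}\big)$ and makes $2(\epsilon'k^2)^2=O(\epsilon^4/L^4)$. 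These are exactly the terms the extra $\poly(k,1/\epsilon,\ln N)$ in the density hypothesis is designed to kill, and with the absolute constants tuned the non-information contributions sum to at most, say, $\tfrac{\epsilon^2}{2L^2}$.

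It remains to control $\tfrac1{N^2}\sum_{i,j}I(a_i,a_j)$ after conditioning. For this I would run the usual telescoping/chain-rule computation on the $\ell$-local random variables $a_1,\dots,a_N$ supplied by the level-$\ell$ Sherali--Adams relaxation of $\{a_{ip}^2=a_{ip}\}$: for conditioning vertices $i_1,\dots,i_\ell$ drawn in a uniformly random order,
\[
\sum_{s=1}^{\ell}\mathbb{E}\Big[\tfrac1N\textstyle\sum_j I\big(a_j;a_{i_s}\,\big|\,a_{i_1},\dots,a_{i_{s-1}}\big)\Big]=\mathbb{E}\Big[\tfrac1N\textstyle\sum_j I\big(a_j;a_{i_1},\dots,a_{i_\ell}\big)\Big]\le\tfrac1N\textstyle\sum_j H(a_j)\le\ln k,
\]
and averaging the inner expectation over the draw of $i_s$ rewrites the $s$-th summand as $\mathbb{E}_{S:\,|S|=s-1}\big[\tfrac1{N^2}\sum_{i,j}I(a_i,a_j\mid a_S)\big]$. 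Since there are $\ell$ nonnegative summands, selecting the number of conditioning rounds (equivalently the seed-set size) uniformly at random from $[\ell]$ gives a seed set $S$ with $|S|\le\ell-2$ for which $\mathbb{E}_{\cal{U}}\big[\tfrac1{N^2}\sum_{i,j}I(a_i,a_j\mid a_S)\big]\le\tfrac{\ln k}{\ell}$; choosing $\ell=\Theta\!\big(\tfrac{L^4\ln k}{\epsilon^4}\big)$ with a sufficiently large constant makes this at most $\tfrac{\epsilon^4}{4L^4}$, so the information term in the displayed bound is at most $\tfrac{\epsilon^2}{2L^2}$ in expectation. Combining with the first paragraph and using concavity of $\sqrt{\cdot}$ to pull the outer expectation inside the square root yields $\mathbb{E}_{\cal{U}}\big[\Delta_{\hat\zeta}\mid a_{i_1}=p_{i_1},\dots,a_{i_r}=p_{i_r}\big]\le\tfrac{\epsilon^2}{L^2}$.

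The step I expect to be the main obstacle is the bookkeeping around \emph{which} conditioning depth the averaging argument certifies: $I(a_i,a_j\mid a_S)$ is not monotone in $S$, so the telescoping sum only guarantees smallness at \emph{some} depth $s\le\ell$, whereas the stated procedure conditions on a fixed number of agents. The clean fix is to let the distribution $\cal{U}$ over seed sequences also randomize the stopping time over $[\ell]$ (or, since the algorithm ultimately enumerates all seed sets of size $\le\ell-2$ anyway, simply to observe that \emph{some} such set realizes the average bound), while verifying that each conditioning step $\pE_\zeta[\,\cdot\,]\gets\pE_\zeta[\,\cdot\mid a_{i_s}=p_{i_s}]$ is well defined --- it is, because $p_{i_s}$ is drawn from the marginal of $a_{i_s}$ under $\pE_\zeta$ and hence has positive pseudo-probability --- and that the pairwise informations $I(a_i,a_j)$ stay meaningful after each conditioning, which is precisely what the Sherali--Adams layering of the booleanity constraints buys us for all of the at most $\ell$ variables involved.
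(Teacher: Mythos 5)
Your proposal is correct and follows essentially the same route as the paper: apply \pref{lem:loc_to_glob} at the $(\tfrac{1}{\sqrt d},1)$ threshold-rank parameters of $G(N,p)$, bound $\|H\|_*$ via \pref{lem:nuclear_norm} with the Jackson polynomial, choose $\epsilon'$ and the density $d$ to kill the non-information terms, pull the expectation inside the square root by Jensen, and control the conditional mutual information by the chain-rule/telescoping bound, which is exactly the content of the paper's cited \pref{lem:global_correlation}. If anything, your accounting is more careful than the paper's on the two points it glosses over: the explicit $L$-dependence in choosing $\epsilon'$ and $\ell$, and the fact that the information bound is only certified at \emph{some} conditioning depth $\le \ell$, which the paper also resolves by noting one may enumerate all conditionings of at most $\ell-2$ agents.
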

\begin{proof}
Let $\epsilon' = \frac{\epsilon^2}{k^2}$.  Then our bound on $\Delta$ is as follows. By \pref{lem:loc_to_glob} we have 
\[\Delta \leq  \frac{\rho k}{N} \|H\|_* + \tau \sqrt{2(\epsilon'k^2)^2 + \frac{2}{N^2}\sum_{i,j \in [N]^2}I(a_i,a_j)} \]
For $G(N,p)$ random graphs, with $d = pN$ we have $\rho = \frac{1}{\sqrt{d}}$ and $\tau = 1$ so we obtain 
\[=  \frac{ k}{N\sqrt{d}} \|H\|_* + \sqrt{2(\epsilon'k^2)^2 + \frac{2}{N^2}\sum_{i,j \in [N]^2}I(a_i,a_j)} \]
Setting $\epsilon' = \frac{\epsilon}{k^2}$ we obtain 
\[=  \frac{k}{\sqrt{d}N} \|H\|_* + \sqrt{2\epsilon^2 + \frac{2}{N^2}\sum_{i,j \in [N]^2}I(a_i,a_j)} \]
Applying \pref{lem:nuclear_norm} we have  $\|H\|_* \leq \frac{Nk}{\epsilon'}= \frac{Nk^3}{\epsilon}$ which yields 
\[\leq  \frac{ k}{N\sqrt{d}} N \frac{k^3}{\epsilon} + \sqrt{2\epsilon^4 + \frac{2}{N^2}\sum_{i,j \in [N]^2}I(a_i,a_j)} \]
Setting $d = \frac{k^8}{\epsilon^4}$ we obtain 
\[\leq  \epsilon + \sqrt{2\epsilon^2 +  \frac{2}{N^2}\sum_{i,j \in [N]^2}I(a_i,a_j)}\]

At this point we have established 
\begin{multline}
\E_{\{(i_y,p_y)\}_{y \in [r]  } \sim \cal{U}}[\Delta|a_{i_1} = p_{i_1}, a_{i_2} = p_{i_2} ,...,a_{i_r} = p_{i_r}]\\ \leq  \epsilon + \E_{\{(i_y,p_y)\}_{y \in [r]  } \sim \cal{U}}\left[\sqrt{2\epsilon^2 +  \frac{2}{N^2}\sum_{i,j \in [N]^2}I(a_i,a_j|a_{i_1}=p_{i_1}, a_{i_2}=p_{i_2}, ..., a_{i_r}=p_{i_r})}\right]    
\end{multline}

We have by Jensen's inequality 
\[\leq  \epsilon + \sqrt{2\epsilon^2 + \E_{\{(i_y,p_y)\}_{y \in [r]  } \sim \cal{U}}\left[ \frac{2}{N^2}\sum_{i,j \in [N]^2}I(a_i,a_j|a_{i_1}=p_{i_1}, a_{i_2}=p_{i_2}, ..., a_{i_r}=p_{i_r})\right]} \]

Finally via \pref{lem:global_correlation} we have $\frac{1}{N^2}\sum_{i,j \in [N]^2}I(a_i,a_j|a_{i_1}, a_{i_2}, ..., a_{i_r}) \leq \epsilon^2$ after $\frac{\ln(k)}{\epsilon^2}$ rounds of conditioning.  
Thus we conclude

$$\E_{\{(i_y,p_y)\}_{y \in [r]} \sim \cal{U}}[\Delta|a_{i_1} = p_{i_1}, a_{i_2} = p_{i_2} ,...,a_{i_r} = p_{i_r}] \leq \epsilon$$
as desired. If we used $\frac{L^4\ln(k)}{\epsilon^4}$  
\end{proof}

In the next section we present the guarantees of the Best Response Pursuit \pref{alg:brp}.  
\section{Best Response Pursuit}  \label{sec:best-response-pursuit}
When local correlation is smaller than $\epsilon^2$,  this implies via markov that the number of nodes $i$ with $\Delta(i) \geq \epsilon$ is less than $ \epsilon N$.  Let these 'bad' nodes be $K_1$, and let $\overline{K_1}$ be the set of 'good' nodes.  A naive strategy is to set all the nodes in $K_1$ to best respond to the nodes in $\overline{K}_1$.  This would succeed if all the nodes in $V$ had less than a $10\epsilon$ fraction of edges incident to $K_1$.  Setting all the nodes in $K_1$ to best respond would satisfy the Nash equilibrium constraints to error $10L\epsilon$.  Thus, a preliminary goal is to assemble a set $K$ such that $K_1 \subseteq W$ satisfying that every node in $\overline{W}$ satisfies that no more than a $10\epsilon$ fraction of edges are incident to $W$. Given such a set $W$, we could set all the nodes in $W$ to best response and the nodes $i \in \overline{W}$ will  continue to satisfy $\Phi(i) \leq \epsilon + 10L\epsilon$.  Unfortunately, this does not guarantee that all the nodes $i \in W$ will satisfy $\Phi(i) \leq 10L\epsilon$ as their may be nodes in $W$ with too many edges in $W$.  Indeed, we will not be able to set all the nodes in $W$ to best response.  However, we can set most of the nodes in $W$ to best respond. This leaves us with a set $K_2 \subset W$ which has not been set to best response with $|K_2| \leq \frac{|K_1|}{2}$.  Now we can iterate the same procedure on $K_2$ as had been performed on $K_1$ which contracts the set of bad nodes geometrically whilst controlling the error in the constraints of the good nodes.  

\subsection{Best response preliminaries: }
\paragraph{Notation:} 
let $Q \subseteq V$ and let $\overline{Q}$ be the complement of $Q$ in $V$. Let $G_{Q} \defeq (Q, E_{Q})$ be a subgraph of $G$ where $E_{Q} \defeq \{(i,j) \in E| (i,j) \in Q\}$. Let $N_Q(i) \defeq \{j \in Q| (i,j) \in Q\}$.   
Let $\Gamma$ denote the product distribution that sets $\mathbb{P}_{\Gamma}(a_{ip}) = \pE[a_{ip}]$.  Let $\Gamma_Q$ be the product distribution defined over the nodes in $Q$ such that $\mathbb{P}_{\Gamma_Q}(a_{ip}) = \mathbb{P}_{\Gamma}(a_{ip})$ for all $i \in Q$. 

Let $\Psi_{\Gamma_Q}(\cdot): \overline{Q} \rightarrow \R^k$ denote a function taking as input a node $\overline{Q}$ and outputting a vector $v = (v_1,...,v_k) \in \R^k$.
\[
v_p \defeq 
\begin{cases} 
1 & \E_{\Gamma_Q}[\sum_{j \in N_{Q}(i)} \sum_{q \in [k]} f_{i,j}(p,q)a_{jq}] \geq \E_{\Gamma_Q}[\sum_{j \in N_{Q}(i)} \sum_{q \in [k]} f_{i,j}(p',q)a_{jq}] \text{ }\forall p' \in [k]\\
0 & \text{otherwise}
\end{cases}
\]
Furthermore let $\Lambda_{\Gamma_{Q}}(i) \defeq \frac{\Psi_{\Gamma_Q}(i)}{\| \Psi_{\Gamma_Q}(i)\|_1}$.  With a slight abuse of notation, we will sometimes interpret $\Lambda_{\Gamma_{Q}}(i)$ as a probability distribution over $[k]$ for all $i \in \overline{Q}$.  We also use $\Gamma_Q \bigcup \Lambda_{\Gamma_Q}(i)$ for $i \in \overline{Q}$ to denote appending the best response distribution over actions of node $i$ onto the distribution $\Gamma_Q$ defined over the nodes $Q$.  Similarly we use the notation $\Gamma_Q \bigcup_{i \in Y} \Lambda_{\Gamma_Q}(i)$ to denote appending the best response distribution of all nodes $i \in Y$ to $\Gamma_Q$ where $Y$ is any set of nodes disjoint from $Q$.

We begin our discussion with a few definitions.           
First we define $\Delta_\zeta(i)$ to be the average covariance of player $i$ with players that are neighbors of $i$.  
\begin{definition}
Let $\Delta_\zeta(i)$ for $i \in [N]$ be defined as 
\[\Delta_\zeta(i) \defeq \sum_{j \in N(i)}\sum_{(p,q) \in [k]^2}|\pE_\zeta[a_{ip}a_{jq}] - \pE_\zeta[a_{ip}]\pE_\zeta[a_{jq}]|\]
\end{definition}
Next we define $\Phi_\Gamma(i)$ for a product distribution $\Gamma \in \calP(k,V)$ to be the value of the Nash equilibrium constraint.
\begin{definition}
Let $\Gamma \in \calP(k,V)$ be a product distribution defined to be $\Gamma \defeq \{\pE_\zeta[a_{ip}]\}_{i \in [N], p \in [k]}$.  Then we define 
\[\Phi_{\Gamma}(i) \defeq \max_{w \in [k]} \Big(\sum_{j \in N(i)} \sum_{q \in [k]} f_{ij}(w,q) \pE_\zeta[a_{jq}] -\sum_{j \in N(i)} \sum_{(p,q) \in [k]^2} f_{ij}(p,q) \pE_\zeta[a_{ip}]\pE_\zeta[a_{jq}]\Big) \]
\end{definition}

Clearly we are trying to find a pseudoexpectation $\pE_\zeta$ such that $\Phi_\Gamma(i) \leq \epsilon$ for all $i \in [N]$.  The next lemma shows that after the rounding procedure in \pref{lem:rounding-main} we satisfy the constraints of $\Phi_\Gamma(i)$ up to an error that scales with the average pairwise covariance  
\begin{restatable}[]{lem}{constraintcorr}
\label{lem:constraintcorr}
Let $\Gamma \defeq \{\pE_\zeta[a_{ip}]\}_{i \in [N], p \in [k]}$ be a product distribution where $\pE_\zeta$ is the output of \pref{lem:rounding-main}.  Then we have that for all $i \in [N]$  
\[\Phi_{\Gamma}(i) \leq \epsilon + L\Delta_\zeta(i)\]
\end{restatable}
We defer the proof of \pref{lem:constraintcorr} to \pref{sec:brplemmas}.  With\pref{lem:constraintcorr}, it suffices to show that our rounding \pref{alg:brp} produces a product distribution such that $\Delta(i) \leq \frac{\epsilon}{L}$ for all $i \in [N]$.  

\begin{lemma}
If global correlation $\Delta \leq \epsilon^2$ then no more than $\epsilon N$ nodes satisfy $\Delta(i) \geq 2\epsilon$ 
\end{lemma}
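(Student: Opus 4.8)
The plan is an elementary averaging-and-Markov argument, and the only point requiring any care is the bookkeeping between the global correlation $\Delta$ and the node-local correlations $\Delta(i)$. First I would record the identity relating the two. By the definition of $\Delta(i)$ we have $|N(i)|\,\Delta(i)=\sum_{j\in N(i)}\sum_{(p,q)\in[k]^2}|\pE[a_{ip}a_{jq}]-\pE[a_{ip}]\pE[a_{jq}]|$, and since this summand is symmetric under swapping $i\leftrightarrow j$ (after summing over all $(p,q)\in[k]^2$), summing over $i$ double-counts each edge and yields $\sum_{i\in[N]}|N(i)|\,\Delta(i)=\sum_{(i,j)\in E}\sum_{(p,q)\in[k]^2}|\pE[a_{ip}a_{jq}]-\pE[a_{ip}]\pE[a_{jq}]|=2|E|\,\Delta$. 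On a $d$-regular graph this is exactly $\frac1N\sum_{i}\Delta(i)=\Delta$; for $G(N,p)$ with $d\ge\poly(k,\frac1\epsilon,\ln N)$ I would condition on the degree-concentration event that every vertex degree lies in $(1\pm o(1))d$ -- which holds with probability $1-N^{-\ln^2(N)}$ by a Chernoff bound and a union bound over the $N$ vertices -- so that $\frac1{|N(i)|}\le\frac1{(1-o(1))d}$ and $2|E|=\sum_i|N(i)|\le(1+o(1))dN$, giving
\[
\frac1N\sum_{i\in[N]}\Delta(i)\;\le\;\frac{2|E|}{(1-o(1))dN}\,\Delta\;\le\;(1+o(1))\,\Delta.
\]

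Next I would apply Markov's inequality to the nonnegative list $\{\Delta(i)\}_{i\in[N]}$. Under the hypothesis $\Delta\le\epsilon^2$, the display above gives $\sum_{i\in[N]}\Delta(i)\le(1+o(1))N\epsilon^2$, hence
\[
\bigl|\{\, i\in[N]:\Delta(i)\ge 2\epsilon \,\}\bigr|\;\le\;\frac{1}{2\epsilon}\sum_{i\in[N]}\Delta(i)\;\le\;\frac{(1+o(1))N\epsilon^2}{2\epsilon}\;=\;\frac{(1+o(1))\epsilon N}{2}\;\le\;\epsilon N,
\]
which is exactly the asserted bound on the number of \emph{bad} nodes $K_1$.

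There is no substantive obstacle here: the core of the argument is one application of Markov's inequality to the local correlations. The only step that is not wholly trivial is the first one, where on $G(N,p)$ the average of the $\Delta(i)$ is not literally equal to $\Delta$ but only within a $(1+o(1))$ multiplicative factor arising from degree fluctuations; this factor is absorbed at no cost because Markov already produces the stronger bound $\frac12\epsilon N$, leaving a factor-of-two slack before the claimed $\epsilon N$. Accordingly the statement holds with high probability $1-N^{-\ln^2(N)}$ over the randomness of the graph, consistently with the probability guarantee in \pref{thm:main-theorem}.
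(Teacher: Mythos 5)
Your proposal is correct and follows essentially the same route as the paper: relate the average of the local correlations $\Delta(i)$ to the global correlation $\Delta$ using that all degrees are $\approx d=pN$, then apply Markov's inequality (the paper phrases the identical averaging bound as a proof by contradiction). Your treatment is slightly more careful about the degree-concentration event, which the paper's one-line computation uses implicitly.
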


\begin{proof}
Assume the contrary, if more than $\epsilon N$ nodes had $\Delta(i) \geq 2\epsilon$ then 
\[\Delta \defeq \frac{1}{|E|} \sum_{i=1}^N |N(i)| \Delta(i) \geq \frac{pN}{2p N^2} \sum_{i=1}^N  \Delta(i) \geq \epsilon^2\] 
which is a contradiction.  
\end{proof}
Finally, we need a lemma that says the Nash constraint is still $2\epsilon$ approximately satisfied if an $\frac{\epsilon}{L}$ fraction of the neighbors of $i$ play best response or in fact any response.
\begin{restatable}{lem}{smooth} 
\label{lem:smooth}
Let $Q$ be a subset of the vertices $V$ in graph $G$.  Let 
Let $\calP(k,Q)$ denote the space of product distributions over $[k]$ actions in $|Q|$ dimensions for agents in the vertices $Q$.  Let $\Gamma_Q$ be a product distribution in $\calP(k,Q)$.  Similarly, let $\calP(k,\overline{Q})$ denote the space of product distributions over $[k]$ actions in $|\overline{Q}|$ dimensions for agents in the vertices $\overline{Q}$.  
Then for all $i \in Q$ we have  
\[\max_{\Gamma_{\overline{Q}} \in \calP(k,\overline{Q})} \Phi_{\Gamma_Q \bigcup \Gamma_{\overline{Q}}}(i) \leq \Phi_{\Gamma_Q}(i) + L\frac{|N_{\overline{Q}}(i)|}{|N(i)|}\]
Here $\Phi_{\Gamma_Q \bigcup \Gamma_{\overline{Q}}}(i)$ is the value of the nash equilibrium constraint on agent $i$ over the vertices in $Q \bigcup \overline{Q} = V$ whereas $\Phi_{\Gamma_Q}(i)$ is only evaluated over with respect to the vertices $Q$.  
\end{restatable}
The proof is deferred to \pref{sec:brplemmas}.   The next subsection is devoted to the proof of  \pref{lem:best-response-pursuit}.  

\subsection{Best Response Rounding Proof}
In this section we prove \pref{lem:best-response-pursuit}, which provides guarantees for \pref{alg:brp}.  First we present the guarantees for \pref{alg:flag} 

\begin{restatable}{lem}{appendlemma} 
\label{lem:append-lemma} Let $H$ be a subset of $V$ with $|H| \geq ln^3(N)$.  Then $W = Flag(H,G)$ satisfies the following   

\begin{enumerate}
    \item $|W| \leq \frac{4}{3}|H|$
    \item $\frac{|N_{W}(i)|}{d} \leq \frac{40 |H|}{3N}+ 10\frac{\ln^2(N)}{d}$ for all $i \in \overline{W}$
\end{enumerate}

with high probability $1 - N^{-\ln^3(N)}\ln(N)$ over the randomness of $G$.  
\end{restatable} 

Next we present guarantees for \pref{alg:bestrespond}. 
\begin{restatable}{lem}{bestresponselemma} 
\label{lem:best-response}
Let $W \subseteq V$ and let $\Gamma \in \calP(k,N)$ be a product distribution over $k$ elements in $N$ dimensions.  We represent the product distribution $\Gamma \defeq \{\{\mu_{ip}\}_{p \in [k]}\}_{i \in V}$ where $\mu_{ip} = \mathbb{P}[a_{ip}=1]$.  Let $\Gamma_{\overline{W}} = \{\{\mu_{ip}\}_{p \in [k]}\}_{i \in \overline{W}}$ .  Let $\zeta \defeq \max_{i \in \overline{W}} \Phi_{\Gamma_{\overline{W}}}(i)$.  Let $\gamma \defeq \max_{i \in \overline{W}} \frac{|N_W(i)|}{d} $.  Then $BestRespond(W, \Gamma)$ outputs $(B, \hat{\Gamma})$ satisfying $|B| \leq \frac{|W|}{10}$ and $\max_{i \in \overline{B}} \Phi_{\hat{\Gamma}}(i) \leq \max(\zeta + L\gamma, 10 L\epsilon )$    
\end{restatable}
Finally we move on to prove \pref{lem:best-response-pursuit}
\bestresponsepursuit*

\begin{proof}
  Let $\zeta_t \defeq \max_{i \in \overline{W}} \Phi_{\Gamma_t}(i)$, and let $\gamma_t \defeq \max_{i \in \overline{W}} \frac{|N_W(i)|}{d}$.  For any $t \in [T-1]$ we have, 
\[\zeta_{t+1}\leq \max(\zeta_{t} + L\gamma_{t+1}, 100L\epsilon)
\leq \max(\zeta_{t} + L(\frac{4c}{3}|K_{t+1}| + c\frac{\ln^2(N)}{d}), 100L\epsilon)\]
Here the first inequality follows from \pref{lem:best-response}.  The second inequality follows from \pref{cor:append-lemma} which establishes that $\gamma_t \leq \frac{4c}{3}|K_t| + c \frac{\ln^2(N)}{d}$. Next, we continue to upper bound by   

\[\leq \max(\zeta_{t} + L (\frac{4c}{3} \frac{1}{4^{t}}\epsilon + c \frac{\ln^2(N)}{d}), 100L\epsilon) \leq \max(\zeta_0, 100L\epsilon) + L\frac{4c}{3} \sum_{i=1}^t \frac{1}{4^i} \epsilon + Lc \frac{t\ln^2(N)}{d}\]
 Here the first inequality follows from \pref{lem:best-response} where we have $|K_{t+1}| \leq \frac{|K_t|}{4}$.  The second inequality follows from the generic observation that for a sequence $a_1,a_2,...,a_{T} \in \R$ satisfying $a_{t+1} = \max(a_t + c_{t+1},b)$, we can conclude $a_{t+1} \leq \max(a_0,b) + \sum_{i=1}^{t+1} c_i$. 
 
 \[\leq 100L\epsilon + L\frac{16c}{9}\epsilon + Lc \frac{t\ln^2(N)}{d} \leq 100L\epsilon + L\frac{16c}{9}\epsilon + Lc \frac{\ln^3(N)}{d} \leq 100L\epsilon + L\frac{16c}{9}\epsilon + Lc\epsilon\] 
 
 The first inequality follows from the base case $\zeta_0 \leq 2\epsilon$.  The second inequality follows from $t \leq \ln(N)$, which holds because $K_t$ contracts exponentially until $K_{T-1} \leq \ln^3(N)$.  Note that the above bound is independent of $t$ because we have carried out all the geometric sums.  Therefore, $\zeta_T \leq 100L\epsilon + L\frac{16c}{9} \epsilon$ which guarantees for all $i \in V$ that 

\[\Phi_{\Gamma_T}(i) \leq 100L\epsilon + 20L\epsilon\] 
\end{proof}

\section*{Acknowledgements}
The author would like to thank Prasad Raghavendra, Ankur Moitra, and Costis Daskalakis for helpful discussions.
\newpage
\bibliographystyle{alpha}
\bibliography{bibliography}

\begin{thebibliography}{CCDP16}

\bibitem[BRS11]{BRS11}
Boaz Barak, Prasad Raghavendra, and David Steurer.
\newblock Rounding semidefinite programming hierarchies via global correlation.
\newblock {\em CoRR}, abs/1104.4680, 2011.

\bibitem[CCDP16]{CCDP16}
Yang Cai, Ozan Candogan, Constantinos Daskalakis, and Christos~H.
  Papadimitriou.
\newblock Zero-sum polymatrix games: A generalization of minmax.
\newblock {\em Math. Oper. Res.}, 41:648--655, 2016.

\bibitem[DFSS14]{DFSS14}
Argyrios Deligkas, John Fearnley, Rahul Savani, and Paul Spirakis.
\newblock Computing approximate nash equilibria in polymatrix games, 2014.

\bibitem[DGP06]{DP06}
Constantinos Daskalakis, Paul~W. Goldberg, and Christos~H. Papadimitriou.
\newblock The complexity of computing a nash equilibrium.
\newblock In {\em Proceedings of the Thirty-Eighth Annual ACM Symposium on
  Theory of Computing}, STOC '06, page 71–78, New York, NY, USA, 2006.
  Association for Computing Machinery.

\bibitem[GLS81]{GLS81}
Martin Grötschel, Lovász László, and Alexander Schrijver.
\newblock The ellipsoid method and its consequences in combinatorial
  optimization.
\newblock {\em Combinatorica}, 1:169--197, 06 1981.

\bibitem[GS11]{GS13}
V.~Guruswami and A.~Sinop.
\newblock Lasserre hierarchy, higher eigenvalues, and approximation schemes for
  graph partitioning and quadratic integer programming with psd objectives.
\newblock In {\em 2013 IEEE 54th Annual Symposium on Foundations of Computer
  Science}, pages 482--491, Los Alamitos, CA, USA, oct 2011. IEEE Computer
  Society.

\bibitem[KLS13]{KLS13}
Michael~J. Kearns, Michael~L. Littman, and Satinder~P. Singh.
\newblock Graphical models for game theory.
\newblock {\em CoRR}, abs/1301.2281, 2013.

\bibitem[Las06]{Lasserre06}
Jean~Bernard Lasserre.
\newblock A sum of squares approximation of nonnegative polynomials.
\newblock {\em SIAM Rev.}, 49:651--669, 2006.

\bibitem[LMM03]{LMM03}
Richard~J. Lipton, Evangelos Markakis, and Aranyak Mehta.
\newblock Playing large games using simple strategies.
\newblock In {\em Proceedings of the 4th ACM Conference on Electronic
  Commerce}, EC '03, page 36–41, New York, NY, USA, 2003. Association for
  Computing Machinery.

\bibitem[Nas50]{Nash48}
John~F. Nash.
\newblock Equilibrium points in n-person games.
\newblock {\em Proceedings of the National Academy of Sciences}, 36(1):48--49,
  1950.

\bibitem[NES00]{Nesterov00}
Yu. NESTEROV.
\newblock {Squared functional systems and optimization problems}.
\newblock LIDAM Reprints CORE 1472, Université catholique de Louvain, Center
  for Operations Research and Econometrics (CORE), January 2000.

\bibitem[Par03]{Parrilo03}
Pablo~A. Parrilo.
\newblock Semidefinite programming relaxations for semialgebraic problems.
\newblock {\em Mathematical Programming}, 96:293--320, 2003.

\bibitem[RT11]{RT12}
Prasad Raghavendra and Ning Tan.
\newblock Approximating csps with global cardinality constraints using {SDP}
  hierarchies.
\newblock {\em CoRR}, abs/1110.1064, 2011.

\bibitem[Rub14]{Rubinstein14a}
Aviad Rubinstein.
\newblock Inapproximability of nash equilibrium.
\newblock {\em CoRR}, abs/1405.3322, 2014.

\bibitem[Sho87]{Shor87}
N.~Z. Shor.
\newblock Class of global minimum bounds of polynomial functions.
\newblock {\em Cybernetics}, 23(6):731--734, Nov 1987.

\bibitem[vNM47]{vonneumann1947}
J.~von Neumann and O.~Morgenstern.
\newblock {\em Theory of games and economic behavior}.
\newblock Princeton University Press, 1947.

\bibitem[WHN16]{WHN16}
Xiaodi Wu, Aram Harrow, and Anand Natarajan.
\newblock Tight sos-degree bounds for approximate nash equilibria.
\newblock 05 2016.

\end{thebibliography}

\appendix

\section{Feasibility of Convex Hierarchy}

\begin{lemma}
Let $(G, \{f_{ij}\}_{i,j \in [N]})$ be a polymatrix game defined on graph $G$ with rescaled payoffs $\{f_{ij}\}_{i,j \in [N]}$ that are $L$-smooth.  If $G$ is a random graph with edge density $\frac{k^8}{\epsilon^4} \ln^3(N)$ we have the constraints of \pref{alg:convex} are feasible for $\ell = \frac{L^4\ln(k)}{\epsilon^4}$.  In particular, any mixed nash equilibrium of the game is a feasible pseudodistribution.  
\end{lemma}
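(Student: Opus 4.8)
The plan is to produce a feasible pseudoexpectation explicitly, namely the true expectation operator of a mixed Nash equilibrium. Concretely, I would invoke Nash's existence theorem to fix a mixed Nash equilibrium $\Gamma = \{\{\mu_{ip}\}_{p\in[k]}\}_{i\in[N]}$, read each $a_{ip}$ as the $\{0,1\}$-indicator of ``agent $i$ plays $p$'', and set $\pE_\zeta[\cdot]\defeq\E_{\vec a\sim\Gamma}[\cdot]$ on $\R[\bold{a}]^{\le\ell}$. Since this is a genuine linear expectation over a distribution on $[k]^N$, it is automatically a degree-$\ell$ pseudoexpectation, so the task reduces to checking the five constraint families of \pref{alg:convex}. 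Four of them need nothing beyond the fact that $\pE_\zeta$ is a real expectation over the integral encoding: $\pE_\zeta[1]=1$; every $p(\bold{a})\in SoS_\ell$ is nonnegative at every outcome so $\pE_\zeta[p(\bold{a})]\ge 0$; and under the encoding every $\vec a\in[k]^N$ has $a_{ip}\in\{0,1\}$ and $\sum_p a_{ip}=1$, so both $(a_{ip}^2-a_{ip})\prod_{a_{uv}\in S}a_{uv}$ and $(\sum_p a_{ip}-1)\prod_{a_{uv}\in S}a_{uv}$ evaluate to $0$ at every outcome and hence have $\pE_\zeta$ equal to $0$ (each such product has degree $\le\ell$, so these pseudoexpectations are defined).

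The real work is the conditioning-preserving correlated equilibrium constraint. Given $i$, $w$, and $S_i\subseteq\{\{a_{jr}\}_{j\in[N]/i}\}_{r\in[k]}$ with $|S_i|\le\ell-2$, I would first dispose of the case where $S_i$ contains two variables of one agent: then $\prod_{a_{uv}\in S_i}a_{uv}$ evaluates to $0$ everywhere and the inequality reads $0\ge 0$. Otherwise $S_i$ picks out a set $T$ of distinct agents with target actions $v_j$, and writing $E$ for the event $\{a_j=v_j\ \forall j\in T\}$ the monomial equals $\mathbf{1}_E$. Setting $X_{i,w}(\bold{a})\defeq\sum_{j\in N(i)}\sum_{p,q}f_{ij}(p,q)a_{ip}a_{jq}-\sum_{j\in N(i)}\sum_q f_{ij}(w,q)a_{jq}$, which on an integral outcome $\vec a$ equals $f_i(\vec a)-f_i(\vec a|_{a_i\to w})$, I would factor
\[\pE_\zeta\Big[\big(X_{i,w}(\bold{a})+\epsilon\big)\textstyle\prod_{a_{uv}\in S_i}a_{uv}\Big]=\Pr_\Gamma[E]\cdot\E_\Gamma\big[X_{i,w}(\bold{a})+\epsilon\,\big|\,E\big],\]
so, since $\Pr_\Gamma[E]\ge 0$, it suffices to prove $\E_\Gamma[X_{i,w}\mid E]\ge-\epsilon$.

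This bound is where I would use $L$-smoothness. Because $\Gamma$ is a product distribution, under $\Gamma\mid E$ agent $i$ keeps its marginal $\mu_i$ and its independence from all other agents, while each $j\in T$ is frozen to $v_j$. Comparing agent $i$'s payoff vector $\pi_i(p)\defeq\sum_{j\in N(i)}\sum_q f_{ij}(p,q)\mu_{jq}$ under $\Gamma$ with its analogue $\hat\pi_i(p)$ under $\Gamma\mid E$, only the neighbours in $N(i)\cap T$ change anything, each contributing at most $2L/d$ by $|f_{ij}|\le L/d$, so $\|\hat\pi_i-\pi_i\|_\infty\le 2|T|L/d\le 2\ell L/d$. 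Then $\E_\Gamma[X_{i,w}\mid E]=\sum_p\mu_{ip}\hat\pi_i(p)-\hat\pi_i(w)\ge\sum_p\mu_{ip}\pi_i(p)-\pi_i(w)-4\ell L/d$, and the Nash condition for agent $i$ under $\Gamma$ is exactly $\sum_p\mu_{ip}\pi_i(p)\ge\pi_i(w)$, so $\E_\Gamma[X_{i,w}\mid E]\ge-4\ell L/d$. With $\ell=\frac{L^4\ln k}{\epsilon^4}$, the edge-density hypothesis gives $4\ell L/d\le\epsilon$ comfortably, hence $\E_\Gamma[X_{i,w}+\epsilon\mid E]\ge 0$ and the constraint holds; since $\Gamma$ was an arbitrary mixed Nash equilibrium, this also proves the ``in particular''.

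The one genuine obstacle is this last step: a mixed Nash equilibrium need not remain even an approximate Nash equilibrium after conditioning on a neighbour's realised action, so a quantitative input is unavoidable, and $L$-smoothness supplies exactly it — it caps each frozen neighbour's influence on agent $i$'s payoff vector at $2L/d$, and since at most $\ell$ neighbours are frozen and $\ell L/d=o(\epsilon)$ under the density hypothesis, the equilibrium degrades by less than the slack $+\epsilon$ already built into the constraint. This matches the body's remark that the relaxation is infeasible without the $L$-smoothness assumption; note also that only the correlated-equilibrium family invokes the hypotheses at all, the other four families holding verbatim for the expectation operator of any mixed Nash equilibrium on any graph.
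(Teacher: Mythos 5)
Your proposal is correct and is essentially the paper's own argument: the witness is the expectation operator of a mixed Nash equilibrium, the normalization, SoS, booleanity, and coloring constraints hold trivially for a genuine expectation over the integral encoding, and the conditioning-preserving correlated equilibrium constraint is verified by exploiting the product structure and using $L$-smoothness to bound the influence of the at most $\ell$ conditioned neighbors, absorbing that $O(\ell L/d)$ perturbation into the $+\epsilon$ slack once $d$ is large enough. Your phrasing via the indicator event $E$ and conditioning is just a cleaner dress for the paper's algebraic split of $S$ into the neighbor variables $R$ and $S\setminus R$, so the two proofs coincide in substance.
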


\begin{proof}
Let $\zeta \in \calP(N,k)$ be the product distribution over $[k]^N$ corresponding to a mixed nash equilibrium.  Since $\zeta$ is a distribution we immediately satisfy $\pE_\zeta[1] = 1$ and $\pE_\zeta[p(\bold{a})] \geq 0$ for all $p(\bold{a}) \in SoS_{\ell}(\{a_{ir}\}_{i \in [N], r \in [k]})$. Furthermore, it is immediate that  $\pE[(a_{ip}^2 - a_{ip})\prod_{a_{uv} \in S}a_{uv}] = 0$  for all $i \in [N]$ and $p \in [k]$, and for all $S \subset \bold{a}$ of size $|S| \leq \ell - 2$.  It is also immediate that $\pE[(\sum_{p=1}^k a_{ip} - 1)\prod_{a_{uv} \in S}a_{uv}] = 0$ for all $i \in [N]$ and for all $S \subset \bold{a}$ of size $|S| \leq \ell - 1$.  Next we have 
\[\E_\zeta[(\sum_{j \in N(i)} \sum_{p,q \in [k]^2}f_{ij}(p,q)a_{ip}a_{jq} - \sum_{j \in N(i)} \sum_{q \in [k]}f_{ij}(w,q)a_{jq} + 2\epsilon)\prod_{a_{uv} \in S}a_{uv}] \]

Let $R \subseteq S$ where for all $a_{uv} \in R$ we have $u \in N(i)$.  This enables us to write    
\[= \E_\zeta[(\sum_{j \in N(i)} \sum_{p,q \in [k]^2}f_{ij}(p,q)a_{ip}a_{jq} - \sum_{j \in N(i)} \sum_{q \in [k]}f_{ij}(w,q)a_{jq} + 2\epsilon)\prod_{a_{uv} \in R}a_{uv}\prod_{a_{uv} \in S/R}a_{uv}] \]
Using the fact that $\zeta$ is a product distribution we obtain 

\begin{equation} \label{eq:feasible1}
= \E_\zeta[(\sum_{j \in N(i)} \sum_{p,q \in [k]^2}f_{ij}(p,q)a_{ip}a_{jq} - \sum_{j \in N(i)} \sum_{q \in [k]}f_{ij}(w,q)a_{jq} + 2\epsilon)\prod_{a_{uv} \in R}a_{uv}] \E_\zeta[\prod_{a_{uv} \in S/R}a_{uv}]     
\end{equation}
Since $\prod_{a_{uv} \in S/R}a_{uv}$ is a sum of squares, we have  $\E_\zeta[\prod_{a_{uv} \in S/R}a_{uv}] \geq 0$.  So to show \pref{eq:feasible1} is nonnegative, it suffices to show  

\begin{equation} \label{eq:feasible2}
\E_\zeta[(\sum_{j \in N(i)} \sum_{p,q \in [k]^2}f_{ij}(p,q)a_{ip}a_{jq} - \sum_{j \in N(i)} \sum_{q \in [k]}f_{ij}(w,q)a_{jq} + 2\epsilon)\prod_{a_{uv} \in R}a_{uv}] \geq 0    
\end{equation}
Moving on, we break $N(i)$ into $N(i)/R$ and $R$ and write \pref{eq:feasible2} accordingly, 
\begin{multline}
  \pref{eq:feasible2} = \E_\zeta[(\sum_{j \in N(i)/R} \sum_{p,q \in [k]^2}f_{ij}(p,q)a_{ip}a_{jq} - \sum_{j \in N(i)/R} \sum_{q \in [k]}f_{ij}(w,q)a_{jq})\\ + \sum_{j \in R} \sum_{(p,q) \in [k]^2}f_{ij}(p,q)a_{ip}a_{jq} - \sum_{j \in R}\sum_{q \in [k]}f_{ij}(w,q)a_{jq} + 2\epsilon) \prod_{a_{uv} \in R}a_{uv}]  
\end{multline}

\begin{multline}
  = \E_\zeta[\big(\sum_{j \in N(i)/R} \sum_{p,q \in [k]^2}f_{ij}(p,q)a_{ip}a_{jq} - \sum_{j \in N(i)/R} \sum_{q \in [k]}f_{ij}(w,q)a_{jq}\big)\prod_{a_{uv} \in R}a_{uv}\\ + \sum_{j \in R} \sum_{(p,q) \in [k]^2}f_{ij}(p,q)a_{ip}a_{jq}\prod_{a_{uv} \in R}a_{uv} - \sum_{j \in R}\sum_{q \in [k]}f_{ij}(w,q)a_{jq}\prod_{a_{uv} \in R}a_{uv} + 2\epsilon \prod_{a_{uv} \in R}a_{uv}]  
\end{multline}
By linearity 
\begin{multline}
  = \E_\zeta[\big(\sum_{j \in N(i)/R} \sum_{p,q \in [k]^2}f_{ij}(p,q)a_{ip}a_{jq} - \sum_{j \in N(i)/R} \sum_{q \in [k]}f_{ij}(w,q)a_{jq}\big)\prod_{a_{uv} \in R}a_{uv}]\\ + \pE_\zeta[\sum_{j \in R} \sum_{(p,q) \in [k]^2}f_{ij}(p,q)a_{ip}a_{jq}\prod_{a_{uv} \in R}a_{uv} - \sum_{j \in R}\sum_{q \in [k]}f_{ij}(w,q)a_{jq}\prod_{a_{uv} \in R}a_{uv} + 2\epsilon \prod_{a_{uv} \in R}a_{uv}]  
\end{multline}

Using the fact that $\zeta$ is a product distribution we obtain 
\begin{multline} \label{eq:feasible3}
  = \E_\zeta[\big(\sum_{j \in N(i)/R} \sum_{p,q \in [k]^2}f_{ij}(p,q)a_{ip}a_{jq} - \sum_{j \in N(i)/R} \sum_{q \in [k]}f_{ij}(w,q)a_{jq}\big)] \pE_\zeta[\prod_{a_{uv} \in R}a_{uv}]\\ + \pE_\zeta[\sum_{j \in R} \sum_{(p,q) \in [k]^2}f_{ij}(p,q)a_{ip}a_{jq}\prod_{a_{uv} \in R}a_{uv} - \sum_{j \in R}\sum_{q \in [k]}f_{ij}(w,q)a_{jq}\prod_{a_{uv} \in R}a_{uv} + 2\epsilon \prod_{a_{uv} \in R}a_{uv}]  
\end{multline}

Again we use the fact that $\prod_{a_{uv} \in R}a_{uv}$ is a sum of squares so $\pE_\zeta[\prod_{a_{uv} \in R}a_{uv}] \geq 0$.  We also use the fact that $\zeta$ is a mixed nash equilibrium so  

\[\E_\zeta[\big(\sum_{j \in N(i)/R} \sum_{p,q \in [k]^2}f_{ij}(p,q)a_{ip}a_{jq} - \sum_{j \in N(i)/R} \sum_{q \in [k]}f_{ij}(w,q)a_{jq}\big)] \geq -\epsilon \]

Plugging both these facts back into the first term of \pref{eq:feasible3} we obtain  

\begin{equation} \label{eq:feasible4} 
   \pref{eq:feasible3} \geq \pE_\zeta[\sum_{j \in R} \sum_{(p,q) \in [k]^2}f_{ij}(p,q)a_{ip}a_{jq}\prod_{a_{uv} \in R}a_{uv} - \sum_{j \in R}\sum_{q \in [k]}f_{ij}(w,q)a_{jq}\prod_{a_{uv} \in R}a_{uv} + \epsilon \prod_{a_{uv} \in R}a_{uv}]      
\end{equation}

By $L$-smoothness we have $f_{ij}(p,q) \geq -\frac{L}{d}$.  Therefore, 
the first term $\pE_\zeta[\sum_{j \in R} \sum_{(p,q) \in [k]^2}f_{ij}(p,q)a_{ip}a_{jq}\prod_{a_{uv} \in R}a_{uv}]$ is lower bounded by $\pE_\zeta[\frac{-L}{d}\sum_{j \in R} \sum_{(p,q) \in [k]^2}a_{ip}a_{jq}\prod_{a_{uv} \in R}a_{uv}]$.  Similarly, using the fact that $f_{ij}(p,q) \leq \frac{L}{d}$, we conclude that the second term $- \pE_\zeta[\sum_{j \in R}\sum_{q \in [k]}f_{ij}(w,q)a_{jq}\prod_{a_{uv} \in R}a_{uv}]$ is lower bounded by $- \pE_\zeta[\frac{L}{d}\sum_{j \in R}\sum_{q \in [k]}a_{jq}\prod_{a_{uv} \in R}a_{uv}]$.  Plugging both relations into \pref{eq:feasible4} we obtain 

\begin{equation} 
   \geq  \pE_\zeta[\frac{-L}{d}\sum_{j \in R} \sum_{(p,q) \in [k]^2}a_{ip}a_{jq}\prod_{a_{uv} \in R}a_{uv} - \frac{L}{d}\sum_{j \in R}\sum_{q \in [k]}a_{jq}\prod_{a_{uv} \in R}a_{uv} + \epsilon \prod_{a_{uv} \in R}a_{uv}]
\end{equation}

\begin{equation} 
   =  \pE_\zeta[\big(\frac{-L}{d}\sum_{j \in R} \sum_{(p,q) \in [k]^2}a_{ip}a_{jq} - \frac{L}{d}\sum_{j \in R}\sum_{q \in [k]}a_{jq} + \epsilon\big) \prod_{a_{uv} \in R}a_{uv}]
\end{equation}
Using the booleanity constraint we lower bound by 
\begin{equation} \label{eq:feasible5}
   \geq \pE_\zeta[\big(-\frac{L|R|(k^2 + k)}{d} + \epsilon\big) \prod_{a_{uv} \in R}a_{uv}] 
\end{equation}
   
Plugging $d = \frac{Lk^8 \ln^3(N)}{\epsilon^5}$ and $|R| = \frac{\ln(k)}{\epsilon^4}$ we obtain 
\[
   -\frac{L|R|(k^2 + k)}{d} + \epsilon \geq 0\]
Since $\prod_{a_{uv} \in R}a_{uv}$ is a sum of squares we conclude $\pref{eq:feasible5} \geq 0$ as desired.      
TODO: Change $d = \frac{Lk^8 \ln^3(N)}{\epsilon^5}$
\end{proof}

\section{Explicit Convex Constraints} \label{sec:explicit}

\paragraph{Moment Matrices: }

Let $M_{\ell}(\{a_{ir}\}_{i \in [N], r \in [k]})$ be the moment matrix with rows indexed by subsets $I \subseteq \{a_{ir}\}_{i \in [N], r \in [k]}$ for $|I| \leq \ell$.  Similarly, columns are indexed by subsets $J \subseteq \{a_{ir}\}_{i \in [N], r \in [k]}$ for $|J| \leq \ell$.
We require that 
\[M_{\ell}(\bold{a}) \succeq 0\]
\[M_{\ell}(\bold{a})_{\emptyset,\emptyset} = 1\]
Note that the PSD constraint implies via the cholesky decomposition that
\[M_{\ell}(\bold{a}) = V^TV\]
for a matrix $V \in \R^{{Nk \choose \ell} \times {Nk \choose \ell}}$ with columns equal to $v_H$ for all $H \subset \{a_{ir}\}_{i \in [N], r \in [k]}$ satisfying $|H| \leq \ell$.  Now, given the vectors   $\{v_H\}_{H \subset \{a_{ir}\}_{i \in [N], r \in [k]}, |H| \leq \ell}$ we add the linear constraints that 
\[\pE[\prod_{a_{uv} \in H} a_{uv}] \defeq \langle v_{H_1}, v_{H_2} \rangle \]
for all $H_1$ and $H_2$ satisfying $H_1 \cup H_2 = H$.  Henceforth we may refer to $\pE[\prod_{(u,v) \in H} a_{uv}]$ without ambiguity.  Also note that we have satisfied the $\pE[1] = 1$ and $\pE[p(\bold{a})] \geq 0$ constraints for all $p(\bold{a}) \in SoS_{\ell}(\bold{a})$.  At this point, we can define the linear functional $\pE: \R_{\leq \ell}[\bold{a}] \rightarrow \R$ from the degree less than $\ell$ polynomials to the reals so that $\pE[\sum_{i} \alpha_i p_i(a)] = \sum_{i} \alpha_i \pE[p_i(a)]$.       

The booleanity and color constraints are relaxed in the manner prescribed by the Sherali-Adams hierarchy.  We add linear constraints for all $ i \in [N]$ and $p \in [k]$

\[ \pE[(a_{ip}^2 - a_{ip})\prod_{a_{uv} \in S}a_{uv}] = \pE[a_{ip}^2\prod_{a_{uv} \in S}a_{uv}] - \pE[a_{ip}\prod_{a_{uv} \in S}a_{uv}] = 0\]
for all $S \subset \{a_{ir}\}_{i \in [N], r \in [k]}$ with size $|S| \leq \ell - 2$.  Additionally, we add linear constraints for all $i \in [N]$

\[\pE[(\sum_{r=1}^k a_{ir} - 1)\prod_{a_{uv} \in S}a_{uv}] = \sum_{r \in [k]}\pE[ a_{ir}\prod_{a_{uv} \in S}a_{uv}] - \pE[\prod_{a_{uv} \in S}a_{uv}] = 0\]
for all $S \subset \{a_{ir}\}_{i \in [N], r \in [k]}$ satisfying $|S| \leq \ell - 1$.  Finally, we add the linear constraints for conditioning preserving correlated equilibrium.  For all $i \in [N]$ and for all $w \in [k]$

\[\pE[\big(\sum_{j \in N(i)} \sum_{p,q \in [k]^2}f_{ij}(p,q)a_{ip}a_{jq}\big)\prod_{a_{uv} \in S}a_{uv}] - \pE[\big(\sum_{j \in N(i)} \sum_{q \in [k]}f_{ij}(w,q)a_{jq} \big)\prod_{a_{uv} \in S}a_{uv}] + \epsilon \geq 0\]

for all $S \subset \{a_{jr}\}_{j \in [N]/i, r \in [k]}$ with size $|S| \leq \ell - 2$.

\section{Best Response Pursuit Lemmas} \label{sec:brplemmas}
\constraintcorr*

\begin{proof}
We begin with the definition 
\[\Phi_{\Gamma}(i) = \max_{w \in [k]}\Big(\sum_{j \in N(i)} \sum_{q \in [k]} f_{ij}(w,q) \pE[a_{jq}] - \sum_{j \in N(i)} \sum_{(p,q) \in [k]^2} f_{ij}(p,q) \pE[a_{ip}]\pE[a_{jq}]\Big) \]

Let $w^*$ be the argmax over $w \in [k]$ in the equation above.  Then we rewrite the equation as 

\begin{multline*}
= \sum_{j \in N(i)} \sum_{(p,q) \in [k]^2} f_{ij}(p,q) (\pE[a_{ip}a_{jq}] - \pE[a_{ip}]\pE[a_{jq}])\\  + \sum_{j \in N(i)} \sum_{q \in [k]} f_{ij}(w^*,q) \pE[a_{jq}] - \sum_{j \in N(i)} \sum_{(p,q) \in [k]^2} f_{ij}(w^*,q) \pE[a_{ip}a_{jq}]     
\end{multline*}

The second line is the conditioning preserving correlated equilibrium constraint which we know is less than $\epsilon$.  Therefore we upper bound by  
\[\leq \epsilon + \sum_{j \in N(i)} \sum_{(p,q) \in [k]^2} f_{ij}(p,q) (\pE[a_{ip} a_{jq}]) - \pE[a_{ip}]\pE[a_{jq}] ) \]
We upper bound via $\ell_{1,\infty}$ holders and use the fact that $f_{i,j}(a,b) \leq \frac{L}{d}$.   
\[\leq \epsilon + \frac{L}{d}\sum_{j \in N(i)} \sum_{(p,q) \in [k]^2} |\pE[a_{ip} a_{jq}] - \pE[a_{ip}]\pE[a_{jq}]|]  = \epsilon + L\Delta(i)\]
\end{proof}

\smooth*

\begin{proof}
We have by definition 

\[\Phi_{\Gamma_Q \bigcup \Gamma_{\overline{Q}}}(i) \defeq \max_{w \in [k]} \E_{\Gamma_Q \bigcup \Gamma_{\overline{Q}}}[ \sum_{j \in N_{Q \bigcup \overline{Q}}(i)}\sum_{q \in [k]} f_{i,j}(w, q)a_{jq} - \sum_{j \in N_{Q \bigcup \overline{Q}}(i)}\sum_{(p,q) \in [k]^2} f_{i,j}(p,q)a_{ip}a_{jq}]\]

Using the fact that $N_{Q \bigcup \overline{Q}}(i)  = N_Q(i) \bigcup N_{\overline{Q}}(i)$ we obtain 

\begin{multline}
     = \max_{r \in [k]} \E_{\Gamma_Q \bigcup \Gamma_{\overline{Q}}}[ \sum_{j \in N_Q(i)}\sum_{p \in [k]} f_{i,j}(w, p)a_{jp} + \sum_{j \in N_{\overline{Q}}(i)}\sum_{p \in [k]} f_{i,j}(w,p) a_{jp}\\ - (\sum_{j \in N_Q(i)}\sum_{(p,q) \in [k]^2} f_{i,j}(p,q)a_{ip}a_{jq} + \sum_{j \in N_{\overline{Q}}(i)}\sum_{(p,q) \in [k]^2} f_{i,j}(p,q)a_{ip}a_{jq})]
\end{multline}

Using the $L$-smooth assumption we obtain 

\[ \leq \max_{w \in [k]} \E_{ \Gamma_Q \bigcup \Gamma_{\overline{Q}}}[ \sum_{j \in N_Q(i)} \sum_{p \in [k]}f_{i,j}(w, p)a_{jp} - \sum_{j \in N_Q(i)} \sum_{(p,q) \in [k]^2}f_{i,j}(p,q)a_{ip}a_{jq} + 2L\frac{|N_{\overline{Q}}(i)|}{|N(i)|}]\]

\[ = \max_{w \in [k]} \E_{\Gamma_Q}[ \sum_{j \in N_Q(i)}\sum_{p \in [k]} f_{i,j}(w,p) a_{jp} - \sum_{j \in N_Q(i)}\sum_{(p,q)\in [k]^2} f_{i,j}(p,q)a_{ip}a_{jq}] + 2L\frac{|N_{\overline{Q}}(i)|}{|N(i)|}\]
Using the definition of $\Phi_Q(i)$ we conclude 
\[ = \Phi_Q(i) + 2L\frac{|N_{\overline{Q}}(i)|}{|N(i)|}\]
as desired.  
\end{proof}

To prove \pref{lem:append-lemma} we will need the following supporting lemma.  
\begin{lemma} \label{lem:append-exponent}
For any subset of nodes $H \subset [N]$, the number of nodes $i \in \overline{H}$ satisfying $|N_H(i)| \geq \max(10 \frac{|H|}{N} d, 10\ln(n))$ is less than $\frac{H}{4}$.    
\end{lemma}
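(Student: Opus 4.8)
The plan is to show the conclusion holds with high probability over the draw of $G \sim G(N,p)$, by a Chernoff bound at each vertex followed by a union bound over $H$ and over the candidate set of heavy vertices. First I would dispose of the trivial range: if $h := |H| < 10\ln n$, then for every $i \in \overline H$ we have $|N_H(i)| \le |H| < 10\ln n \le \max(10\tfrac{|H|}{N}d,\, 10\ln n)$, so there are no heavy vertices at all, and $0 < h/4$. Henceforth assume $h \ge 10\ln n$.

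Fix a set $H$ with $|H| = h$. For each $i \in \overline H$ the count $|N_H(i)|$ is a sum of $h$ independent $\mathrm{Bernoulli}(p)$ variables (one per potential edge from $i$ into $H$), with mean $\mu := hp = hd/N$, and the threshold $t := \max(10 hd/N,\, 10\ln n)$ satisfies $t \ge 10\mu$. The multiplicative Chernoff bound in the form $\Pr[X \ge a] \le (e\mu/a)^a$ for $a \ge \mu$ therefore gives $\Pr[\,|N_H(i)| \ge t\,] \le (e\mu/t)^t \le (e/10)^t \le (e/10)^{10\ln n} \le n^{-13}$. The key structural point is that the indicators $\mathbf{1}[\,|N_H(i)| \ge t\,]$ for distinct $i \in \overline H$ are mutually independent, since they are determined by pairwise-disjoint edge sets (the edges out of $i$ into $H$). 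Hence, writing $X_H$ for the number of heavy vertices in $\overline H$, the event $X_H \ge h/4$ forces some subset $B \subseteq \overline H$ of size $\lceil h/4\rceil$ to be entirely heavy, and a union bound over $B$ together with independence yields
\[
\Pr\!\big[X_H \ge \tfrac{h}{4}\big] \;\le\; \binom{N}{\lceil h/4\rceil}\,\big(n^{-13}\big)^{\lceil h/4\rceil} \;\le\; N^{\,h/4+1}\,n^{-13h/4} \;=\; N^{\,1-3h}.
\]

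It remains to union bound over $H$. There are at most $\binom{N}{h} \le N^{h}$ sets of size $h$, so the probability that some $H$ of size $h$ fails is at most $N^{h}\cdot N^{1-3h} = N^{1-2h}$, and summing this geometric series over $h \ge \ln^3 N$ — the only regime in which \pref{alg:flag} actually invokes the lemma — gives total failure probability at most $2N^{1-2\ln^3 N} \le N^{-\ln^3 N}$ for $N$ large, matching the high-probability guarantee used in \pref{lem:append-lemma}. (For the fully general statement one instead sums from $h \ge 10\ln n$ and obtains failure probability $O(N^{1-20\ln N})$, still vanishing.) The only delicate step is the independence claim: one must keep $i$ ranging over $\overline H$ so that the edge sets $\{\{i,j\}: j \in H\}$ are genuinely disjoint across $i$; the rest is routine bookkeeping, and the argument goes through precisely because the exponent $13/4 > 1$ in the per-set bound leaves enough slack to absorb the $\binom{N}{h}$-sized union over choices of $H$.
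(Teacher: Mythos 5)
Your proof is correct and follows essentially the same route as the paper's: a per-vertex Chernoff bound for the event $|N_H(i)| \geq \max(10 \frac{|H|}{N} d, 10\ln N)$, a concentration step showing fewer than $|H|/4$ vertices can be heavy, and a union bound over all ${N \choose |H|}$ choices of $H$. The only (harmless) difference is the middle step, where the paper applies a second Chernoff bound to the sum of heavy-vertex indicators while you union-bound over size-$\lceil |H|/4 \rceil$ subsets of heavy vertices using their mutual independence, which you rightly justify via the disjointness of the relevant edge sets (a point the paper leaves implicit).
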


\begin{proof}
We proceed by an application of the chernoff and union bound.  Let's fix the set $H$.  Then for any node $i \in [N]$, the probability $|N_H(i)| \geq \max(10 \frac{|H|}{N} |E(i)|, 10\ln(n))$ is upper bounded as follows.  Let $e_j$ be the following random variable with $e_j = 1$ if $j \in H$ and $e_j = 0$ otherwise.  Here the randomness is taken with respect to the randomness in generating the graph.  Let $\mu \defeq \E[\frac{1}{N}\sum_{j=1}^N e_j] = \frac{|H|}{N}\frac{d}{N}$.  We define a constant $\delta > 0$ such that $(1+\delta)\mu N = \max(10 \frac{|H|}{N} d, 10\ln(n)) \geq 10 \frac{|H|}{N} d$.  This implies $\delta > 9$. Now let $\tau \defeq \mathbb{P}[\frac{1}{N}\sum_{j=1}^N e_j \geq (1+\delta)\mu]$.  Using the chernoff bound we obtain.   
\[\tau \defeq \mathbb{P}[\frac{1}{N}\sum_{j=1}^N e_j \geq (1+\delta)\mu] \leq \exp(-\delta \mu N) \leq \exp(-9\mu N)\]
If $\mu N \geq \ln(N)$ then we have $\tau \leq \exp(-9 \ln(N))$.  On the other hand, if $\mu N \leq \ln(N)$ we define $\delta$ such that 
$(1+\delta)\mu N = \max(10 \frac{|H|}{N} d, 10\ln(n)) \geq 10 \ln(N)$ which implies $\delta \mu N \geq 9 \ln(N)$.  Thus we have in this case 

\[\tau \defeq \mathbb{P}[\frac{1}{N}\sum_{j=1}^N e_j \geq (1+\delta)\mu] \leq \exp(-\delta \mu N) \leq \exp(-9\ln(N))\]
Now let $H^* \defeq \{i \in \overline{H}| |N_H(i)| \geq \max(10 \frac{|H|}{N} d, 10\ln(n))\}$.  We need to show $|H^*| \leq \frac{|H|}{4}$.  Let $w_i$ for $i \in [N]$ be a random variable where $w_i = 1$ if $i \in H^*$ and $w_i = 0$ otherwise. Let $\mu' \defeq \mathbb{E}[\frac{1}{N}\sum_{i=1}^N w_i] = \tau$.  Let $\delta' > 0$ be defined such that $(1+\delta')\mu' N = \frac{|H|}{4}$ then by chernoff bound we have
\[\mathbb{P}(\frac{1}{N}\sum_{i=1}^N w_i \geq \mu'(1+\delta')) \leq \Big(\frac{e^{\delta'}}{(1+\delta')^{(1+\delta')}}\Big)^{\mu'N} = \exp((\delta'-(1+\delta') \ln(1+\delta')) \mu' N)  
\]
Using the fact that $(1+\delta')\mu' N = \frac{|H|}{4}$ we obtain
\begin{equation}
\leq \exp(\frac{|H|}{4} -\ln(1+\delta') \frac{|H|}{4})  
\end{equation}
Next we use the fact that $\delta' = \frac{|H|}{4\mu' N} - 1 \geq \frac{1}{4N} \exp(9 \ln(N)) - 1 \geq \frac{N^8}{4} - 1$.  Plugging $\delta' \geq \frac{N^8}{4} - 1$ into 
\[\leq \exp(-8\ln(N) \frac{|H|}{4}) \leq N^{-2|H|}    \]
Since there are no more than ${N \choose |H|} \leq N^{|H|}$ subsets of size $|H|$  we conclude via union bound that  $|H^*| \leq \frac{|H|}{4}$ with probability $ 1 - N^{-2|H|} N^{|H|} \geq  1 - N^{-|H|} \geq 1 - N^{-\ln^3(N)}$ as desired.

\end{proof}

\appendlemma*

\begin{proof}
Let $H_1 \defeq H$.  Then we have by \pref{lem:append-exponent} that $Flag(H_1,G)$ iteratively appends $H_2,...,H_y$ to $H_1$ where $H_i \leq \frac{1}{4^{j-1}}H_1$ for all $j \in [2,y]$.  Thus $|W| \leq \frac{4}{3}|H|$.  Furthermore, each iteration of of $Flag(H_1,G)$ ensures that for all $i \in \overline{W}$ 
\begin{equation} \label{eq:app-exp-1}
\frac{|N_{W}(i)|}{d} = \frac{1}{d}|N_{\bigcup_{j \in [y]} H_j}(i)| = \sum_{j=1}^y \frac{|N_{H_j}(i)|}{d} \leq \sum_{j=1}^y 10\max \Big(\frac{|H_j|}{N}, \frac{\ln(N)}{d}\Big)    
\end{equation}

\[\leq 10\sum_{j=1}^y \Big(\frac{|H_j|}{N} + \frac{\ln(N)}{d}\Big)\leq \frac{40}{3N}|H_1| + 10\frac{\ln^2(N)}{d}\] 

Here the first inequality follows from the fact that at each iteration $Flag(H_1,G)$ performs the following operation.  For $W = \bigcup_{r=1}^j H_r$ we define $H_{j+1}$ to be 
\[H_{j+1} \defeq \Big\{i \in \overline{W}\Big| \frac{|N_{H_{j}}(i)|}{d} \geq \max \Big(10 \frac{|H_{j}|}{N}, 10\frac{\ln(n)}{d}\Big)\Big\}\]
Thus for all $i \in \overline{W}$ we have 

\[\frac{|N_{H_{j}}(i)|}{d} \leq \max \Big(10 \frac{|H_{j}|}{N}, 10\frac{\ln(n)}{d}\Big)\]
Continuing to bound \pref{eq:app-exp-1}, we separate the max into a sum to conclude
\[\leq 10\sum_{j=1}^y \Big(\frac{|H_j|}{N} + \frac{\ln(N)}{d}\Big)\leq \frac{40}{3N}|H_1| + 10\frac{\ln^2(N)}{d}\] 
We have established $\frac{|N_{W}(i)|}{d} \leq \frac{40 |H|}{3N}+ 10\frac{\ln^2(N)}{d}$ for all $i \in \overline{W}$ as desired.  
\end{proof}

\bestresponselemma*

\begin{proof}

First, using \pref{lem:smooth} we have $\max_{i \in \overline{W}} \Phi_{\hat{\Gamma}}(i) \leq \zeta + L\gamma$ which concludes the lemma for $i \in \overline{W}$.  To conclude the lemma it suffices to show that $BestRespond(W,\Gamma)$ sets a subset $A \subseteq W$ to best respond to $\Gamma_{\overline{W}}$, and that for all $i \in A$, $\frac{|N_W(i)|}{d} \leq 10\epsilon$.  

To show this, let $\kappa \in \R^+$ be a constant such that $|W| = \kappa N$.  Let $e_1,...,e_{\kappa^2N^2}$ be the indicator random variables distributed $Bern(p)$ for the presence of an edge between any pair of nodes in $W$.  Let $\mu = p$, and let $\delta \mu = \frac{10\epsilon d}{\kappa N} = \frac{10\epsilon}{\kappa} p = \frac{10\epsilon}{\kappa} \mu$ which implies $\delta = \frac{10\epsilon}{\kappa}$.  Then for $|W| \leq \epsilon N$ we have $\delta \geq 10$.  Therefore, we invoke the upper chernoff bound to obtain  
\[\mathbb{P}[\frac{1}{\kappa^2 N^2} \sum_{i=1}^{\kappa^2 N^2} e_i \geq \frac{10\epsilon d}{\kappa N} ] \leq \exp(-\frac{9\epsilon d}{\kappa N} \kappa^2 N^2) = \exp(-9\epsilon d \kappa N) = \exp(-9 \frac{1}{\epsilon^3}\ln^3(N) \kappa N)\]

We also know there are no more than ${N \choose \kappa N} \leq N^{\kappa N}$ subsets of $V$ of size $\kappa N$.  Therefore, applying union bound we conclude that $W$ has fewer than $10 \epsilon d \kappa N$ edges with high probability $1 - N^{-\ln^3(N)}$.  Furthermore, via markov we conclude that no more than $\frac{1}{10}|W|$ nodes $i \in W$ satisfy $N_W(i) \geq 100 \epsilon d$.  Equivalently, there exists a subset $A \subseteq W$ of size $|A| \geq \frac{9|W|}{10}$ such that $\frac{|N_W(i)|}{d} \leq 100 \epsilon$ for all $i \in A$. By \pref{lem:smooth}, we conclude $\max_{i \in A} \Phi_{\hat{\Gamma}}(i) \leq 100L\epsilon$.  

Putting everything together, for $BestResponsd(W,\Gamma)$ outputs $B = W/A$ such that  $\overline{B} = \overline{W} \bigcup A$.  Thus we conclude,
$$\max_{i \in \overline{B}} \Phi_{\hat{\Gamma}}(i) = \max_{i \in \overline{W} \bigcup A} \Phi_{\hat{\Gamma}}(i) = \max(\max_{i \in \overline{W}} \Phi_{\hat{\Gamma}}(i), \max_{i \in A} \Phi_{\hat{\Gamma}}(i)) \leq \max(\zeta + L\gamma, 100 L\epsilon )$$  
as desired.  

\end{proof}

\section{Miscellaneous}
\begin{lemma} \label{lem:jackson}(Bernstein, Jackson, folklore) 
There exists a series of polynomials $\{g_p(x)\}_{p=1}^n$ where  $g_{p}(x) \defeq \sum_{r=1}^{p} \alpha_r x^r$ and $\{\alpha_r\}_{r=1}^n$ is a series of coefficients satisfying  $\alpha_{r} \leq \frac{18}{\sqrt{\pi}} r^{-3/2}$ and $|g_n(x) - |x|| \leq \frac{6}{n}$.  
\end{lemma}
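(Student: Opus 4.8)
The statement bundles two facts about $g_n$: the coefficient decay $\alpha_r \le \tfrac{18}{\sqrt\pi}r^{-3/2}$ and the approximation rate $\||x|-g_n\|_\infty \le 6/n$ on $[-1,1]$. The rate $6/n$ is, up to the constant, the optimal Bernstein rate for degree-$n$ approximation of $|x|$, and the cleanest way to obtain both facts simultaneously is through the Chebyshev expansion of $|x|$. Substituting $x=\cos\theta$ and computing the cosine coefficients of $|\cos\theta|$ — split $[0,\pi]$ at $\pi/2$, use $\cos\theta\cos(2k\theta)=\tfrac12(\cos((2k{+}1)\theta)+\cos((2k{-}1)\theta))$, and evaluate $\sin((2k{\pm}1)\tfrac\pi2)=\pm(-1)^{k}$ — gives
\[
|x| \;=\; \frac{2}{\pi} \;+\; \frac{4}{\pi}\sum_{k\ge 1} \frac{(-1)^{k+1}}{4k^2-1}\,T_{2k}(x),
\]
converging uniformly on $[-1,1]$ since $\|T_{2k}\|_\infty=1$ and $\sum_k (4k^2-1)^{-1}=\tfrac12$. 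I would define $g_n$ as the truncation keeping all terms of degree $\le n$, i.e. summing over $1\le k\le\lfloor n/2\rfloor$, so $\deg g_n\le n$ and the coefficient attached to $T_{2k}$ is $\alpha_{2k}=\tfrac{4(-1)^{k+1}}{\pi(4k^2-1)}$ (with $\alpha_0=\tfrac2\pi$ and odd coefficients zero).

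For the error, $\||x|-g_n\|_\infty\le \tfrac4\pi\sum_{k>\lfloor n/2\rfloor}(4k^2-1)^{-1}$, and the partial-fraction identity $\tfrac{1}{4k^2-1}=\tfrac12(\tfrac1{2k-1}-\tfrac1{2k+1})$ telescopes the tail to $\tfrac12\cdot\tfrac1{2\lfloor n/2\rfloor+1}\le\tfrac1{2n}$; hence $\||x|-g_n\|_\infty\le \tfrac{2}{\pi n}<\tfrac6n$. For the coefficients, every nonzero $\alpha_r$ has $r=2k\ge 2$ with $|\alpha_r|=\tfrac{4}{\pi(r^2-1)}\le \tfrac{16}{3\pi r^2}\le \tfrac{18}{\sqrt\pi}r^{-3/2}$ (the middle step since $r\ge2$, the last since $\tfrac{16}{3\pi}r^{-1/2}\le\tfrac{16}{3\pi}<\tfrac{18}{\sqrt\pi}$), while $\alpha_0=\tfrac2\pi<\tfrac{18}{\sqrt\pi}$; so the decay holds with a very loose constant. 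If the statement is read as literally forbidding a constant term, the $\tfrac2\pi$ can be merged into the $x^2$-coefficient or simply charged to the error term at negligible cost.

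The only nontrivial computation is the cosine integral yielding $\alpha_{2k}=\tfrac{4(-1)^{k+1}}{\pi(4k^2-1)}$; the rest is arithmetic. The point I would be careful about, and would flag explicitly in the writeup, is the basis in which ``$g_p(x)=\sum_{r=1}^p\alpha_r x^r$'' is read: the $\alpha_r$ above are Chebyshev coefficients, not monomial coefficients. This is not cosmetic — $T_{2k}$ has leading monomial coefficient $2^{2k-1}$ with no cancellation across the truncated sum, so the monomial coefficients of $g_n$ are exponentially large in $n$; indeed no degree-$n$ polynomial with $O(1/n)$ error can have \emph{monomial} coefficients obeying $\alpha_r\le\tfrac{18}{\sqrt\pi}r^{-3/2}$ (a short argument: such a bound forces $|g_n'|\le O(\sqrt n)$ on $[-1,1]$, which via the parabola inequality $g_n(t)\ge\sqrt t-\varepsilon$ forces $\varepsilon=\Omega(n^{-1/2})$).

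If, instead, a decaying \emph{monomial} expansion is the intended reading, the natural construction is the truncated binomial series $g_n(x)=1-\sum_{r=1}^{\lfloor n/2\rfloor}c_r(1-x^2)^r$ with $c_r=(-1)^{r-1}\binom{1/2}{r}=\tfrac{1}{2^{2r-1}r}\binom{2r-2}{r-1}$: from $\sqrt{1-u}=1-\sum_{r\ge1}c_ru^r$ at $u=1-x^2\in[0,1]$ one gets $0\le g_n(x)-|x|=\sum_{r>\lfloor n/2\rfloor}c_r(1-x^2)^r\le\sum_{r>\lfloor n/2\rfloor}c_r$, and $\binom{2m}{m}\le 4^m/\sqrt{\pi m}$ gives $c_r\le\tfrac{1}{\sqrt{2\pi}}r^{-3/2}\le\tfrac{18}{\sqrt\pi}r^{-3/2}$ for $r\ge2$ (with $c_1=\tfrac12$). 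But then the tail is $\Theta(n^{-1/2})$, so this variant proves only $O(n^{-1/2})$ error — the two stated bounds cannot both hold in the monomial basis. I would therefore present the Chebyshev construction as the one that matches the lemma as worded, noting the coefficient-basis convention it requires, and would correspondingly state the downstream consequences (the constant in \pref{lem:nuclear_norm}) in terms of whichever basis is ultimately adopted.
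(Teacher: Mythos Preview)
The paper does not prove this lemma at all --- it is listed in the ``Miscellaneous'' appendix as folklore with no argument --- so there is no paper proof to compare against. Your proposal is therefore more than the paper offers, and in fact your analysis is sharper than the paper's own treatment: you correctly flag that the two claimed properties (monomial coefficients $|\alpha_r|\le\tfrac{18}{\sqrt\pi}r^{-3/2}$ and uniform error $O(1/n)$) are in tension, and you give two clean constructions --- the Chebyshev truncation, which delivers the $O(1/n)$ rate with the stated coefficient decay in the \emph{Chebyshev} basis, and the binomial series $1-\sum c_r(1-x^2)^r$, which delivers $c_r=O(r^{-3/2})$ but only an $O(n^{-1/2})$ rate.

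The one point I would push back on is your impossibility sketch for the monomial reading. Your bound $|g_n'|=O(\sqrt n)$ is correct, but the step ``via the parabola inequality $g_n(t)\ge\sqrt t-\varepsilon$ forces $\varepsilon=\Omega(n^{-1/2})$'' is not quite right as written: we are approximating $|x|$, not $\sqrt t$, and the substitution needs care. A second-derivative argument ($g_n$ even, $g_n'(0)=0$, $g_n'(\xi)\ge 1/2$ for some $\xi\in(0,4\varepsilon)$, hence $\|g_n''\|_\infty\ge 1/(8\varepsilon)$, while $\|g_n''\|_\infty\lesssim n^{3/2}$ from the coefficient bound) only gives $\varepsilon=\Omega(n^{-3/2})$, which does \emph{not} rule out $\varepsilon\sim 1/n$. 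So the incompatibility you assert may well be true, but the argument as sketched doesn't close it; I would either tighten this or simply drop the impossibility claim and let the two explicit constructions speak for themselves.

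Your final remark --- that the choice of basis feeds directly into \pref{lem:nuclear_norm}, whose proof genuinely needs the monomial decomposition $H=\sum_r\alpha_r Q^{\odot r}$ --- is exactly the right thing to flag. The paper's own use of the lemma there (``$\sum_{r=1}^p|\alpha_r|\le\tfrac{18}{\sqrt\pi}p$'') is consistent with the weaker bound $|\alpha_r|\le\tfrac{18}{\sqrt\pi}$ rather than the stated $r^{-3/2}$ decay, which further suggests the lemma statement is loose; your proposal to state the downstream constants in whichever basis is adopted is the correct fix.
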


\begin{lemma}[\cite{RT12}] \label{lem:global_correlation} 
There exists $t \leq k$ such that $E_{i_1,...,i_t \sim W}E_{i,j \sim W}[I(X_i,X_j|X_{i_1}, ..., X_{i_t})] \leq \frac{\ln(q)}{k-1}$ where $q$ is alphabet size.  
\end{lemma}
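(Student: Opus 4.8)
The plan is to prove \pref{lem:global_correlation} by the monotone-potential (``global correlation reduction'') argument of \cite{RT12}. Write $q$ for the alphabet size, let the $X_i$ denote the local random variables supplied by the relaxation, and let $i_1,\dots,i_t,i,j$ be drawn independently and uniformly from $W$. Every entropy or conditional mutual information appearing below involves at most $t+2 \le k+1$ coordinates, and in each invocation of the lemma the relaxation degree $\ell$ exceeds $k+1$, so the relevant local distributions are mutually consistent and the entropy chain rule applies to them. For $s \ge 0$ define the potential
\[
\Phi_s \defeq \mathbb{E}_{i_1,\dots,i_s \sim W}\;\mathbb{E}_{i \sim W}\big[\,H(X_i \mid X_{i_1},\dots,X_{i_s})\,\big].
\]

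First I would record the two structural facts driving the argument. Since $X_i$ takes at most $q$ values, $\Phi_0 = \mathbb{E}_{i\sim W}[H(X_i)] \le \ln q$; and since conditioning on an additional coordinate never increases entropy, $0 \le \Phi_{s+1} \le \Phi_s$ for all $s$. The key step is to identify the one-step decrement of the potential with exactly the averaged conditional mutual information in the statement. Padding $\Phi_s$ with the dummy index $i_{s+1}$ and expanding, one gets $\Phi_s - \Phi_{s+1} = \mathbb{E}_{i_1,\dots,i_{s+1},i}[H(X_i\mid X_{i_1},\dots,X_{i_s}) - H(X_i\mid X_{i_1},\dots,X_{i_{s+1}})]$, and by the definition of conditional mutual information this equals
\[
\Phi_s - \Phi_{s+1} \;=\; \mathbb{E}_{i_1,\dots,i_{s+1},\,i}\big[\,I(X_i;X_{i_{s+1}}\mid X_{i_1},\dots,X_{i_s})\,\big].
\]
Because $i$ and $i_{s+1}$ are both uniform over $W$ and independent of $i_1,\dots,i_s$, the joint law of $(i_1,\dots,i_s,i,i_{s+1})$ coincides with that of $(i_1,\dots,i_s,i,j)$, so the right-hand side is precisely $\mathbb{E}_{i_1,\dots,i_s\sim W}\,\mathbb{E}_{i,j\sim W}[\,I(X_i;X_j\mid X_{i_1},\dots,X_{i_s})\,]$.

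Finally I would telescope: $\sum_{s=0}^{k-2}(\Phi_s - \Phi_{s+1}) = \Phi_0 - \Phi_{k-1} \le \ln q$ is a sum of $k-1$ nonnegative terms, so some $s^\star \in \{0,\dots,k-2\}$ satisfies $\Phi_{s^\star} - \Phi_{s^\star+1} \le \frac{\ln q}{k-1}$. Taking $t = s^\star \le k$ and unwinding the previous display yields $\mathbb{E}_{i_1,\dots,i_t\sim W}\mathbb{E}_{i,j\sim W}[\,I(X_i;X_j\mid X_{i_1},\dots,X_{i_t})\,] \le \frac{\ln q}{k-1}$, which is the claim. I do not expect a genuine obstacle here; the argument is essentially textbook, so the only point that warrants attention is the one already flagged, namely that all the entropies, the chain rule, and hence the telescoping are legitimate only because the relaxation's local distributions are consistent on coordinate subsets of size up to $k+1$, which holds whenever the relaxation degree satisfies $\ell \ge k+1$ as it does in every application of the lemma.
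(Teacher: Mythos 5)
Your argument is correct and is essentially the same as the paper's: both identify the one-step drop in the averaged conditional entropy with the averaged conditional mutual information, telescope over $k-1$ steps, bound the total by $\ln q$, and conclude by pigeonhole. Your write-up is merely more explicit than the paper's terse version (spelling out the pigeonhole step and the local-consistency requirement for the chain rule), but there is no substantive difference in approach.
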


\begin{proof}
Linearity of expectation we have for any $t \leq k-2$

\[E_{i,i_1,...i_t\sim W} [H(X_i| X_{i_1},..., X_{i_t})] =E_{i,i_1,...i_t\sim W} [H(X_i| X_{i_1},..., X_{i_{t-1}}) - E_{i_1,...,i_{t-1} \sim W}E_{i,i_t \sim W}[I(X_i,X_{i_t}|X_{i_1}, ..., X_{i_{t-1}})] \]

adding equalities from $t=1 $ to $k-2$ we get 

\[E[H(X_i)] - E_{i_1,...,i_{k-2} \sim W}[H(X_i|X_{i_1},...,X_{i_{k-2}})] = \sum_{i\leq t\leq k-1} E_{i,j,X_{i_1},..., X_{i_{t-1}} \sim W}[I(X_i,X_j|X_{i_1}, ..., X_{i_t})]\]

Lemma follows from $H(X) \leq \ln(q)$
\end{proof}

\begin{definition} (Payoff Rescaling) For each game $p_{ij}$ for $(i,j) \in E$, let 
\[U_i \defeq \max_{a_1,a_2,...,a_N \in [k]} \sum_{j \in N(i)} p_{ij}(a_i, a_j)\]

and let 

\[L_i \defeq \min_{a_1, a_2,...,a_N \in [k]} \sum_{j \in N(i)}p_{ij}(a_i, a_j)\]

Then let $f_{ij}(a_i,a_j) = (p_{ij}(a_i,a_j) - \frac{L_i}{|N(i)|}) \frac{1}{U_i - L_i}$ such that $\sum_{j \in N(i)} f_{ij}(a_i,a_j) \in [0,1]$ for all $\vec{a} \in [k]^N$.    
\end{definition}

\begin{fact} Let $x_1,...,x_N$ be i.i.d bernoulli random variables with $\mu = \E[x_i]$.    
Upper chernoff bound, for any $\delta > 0$  

\[\mathbb{P}(\frac{1}{N}\sum_{i=1}^N x_i \geq (1+\delta)\mu) \leq \Big(\frac{e^\delta}{(1+\delta)^{(1+\delta)}}\Big)^{\mu N}\]
Lower chernoff bound.  For $\delta \in [0,1]$ 
\[\mathbb{P}(\frac{1}{N}\sum_{i=1}^N x_i \geq (1-\delta)\mu) \leq \exp\Big(\frac{-\delta^2 \mu N}{2}\Big)\]
\end{fact}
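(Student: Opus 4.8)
The plan is to prove both inequalities by the standard exponential-moment (Chernoff--Bernstein) method. Write $S \defeq \sum_{i=1}^N x_i$, so $\E[S] = \mu N$, and note that by independence of the $x_i$ the moment generating function factorizes: $\E[e^{tS}] = \prod_{i=1}^N \E[e^{t x_i}] = \prod_{i=1}^N(1-\mu+\mu e^t) = (1+\mu(e^t-1))^N$. Applying the elementary bound $1+y \leq e^y$ with $y = \mu(e^t-1)$ gives the uniform estimate $\E[e^{tS}] \leq \exp(\mu N(e^t-1))$ valid for every real $t$, which will drive both tail bounds.

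For the upper tail, fix $t>0$ and apply Markov's inequality to the nonnegative variable $e^{tS}$: $\mathbb{P}(\tfrac1N\sum_i x_i \geq (1+\delta)\mu) = \mathbb{P}(e^{tS}\geq e^{t(1+\delta)\mu N}) \leq e^{-t(1+\delta)\mu N}\E[e^{tS}] \leq \exp(\mu N(e^t-1) - t(1+\delta)\mu N)$. I would then minimize the exponent over $t$; differentiating shows the optimum is at $e^t = 1+\delta$, i.e.\ $t = \ln(1+\delta) > 0$ since $\delta>0$, and substituting this back yields exactly $\exp(\mu N(\delta - (1+\delta)\ln(1+\delta))) = \big(e^\delta/(1+\delta)^{1+\delta}\big)^{\mu N}$, the claimed upper Chernoff bound.

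For the lower tail I would run the symmetric argument with $-t$ in place of $t$ (and state the bound in its standard form $\mathbb{P}(\tfrac1N\sum_i x_i \leq (1-\delta)\mu)$): for $t>0$, $\mathbb{P}(S \leq (1-\delta)\mu N) = \mathbb{P}(e^{-tS}\geq e^{-t(1-\delta)\mu N}) \leq e^{t(1-\delta)\mu N}\E[e^{-tS}] \leq \exp(\mu N(e^{-t}-1) + t(1-\delta)\mu N)$. Optimizing over $t$ gives $e^{-t} = 1-\delta$, which is admissible since $\delta \in [0,1)$, and produces the intermediate estimate $\big(e^{-\delta}/(1-\delta)^{1-\delta}\big)^{\mu N}$.

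The only genuinely non-mechanical step, and the main obstacle, is weakening this last expression to the clean form $\exp(-\delta^2\mu N/2)$, which is equivalent to the scalar inequality $(1-\delta)\ln(1-\delta) \geq -\delta + \delta^2/2$ on $[0,1)$. I would establish this by substituting the Maclaurin series $\ln(1-\delta) = -\sum_{m\geq 1}\delta^m/m$ and computing $(1-\delta)\ln(1-\delta) = -\delta + \sum_{m\geq 2}\big(\tfrac{1}{m-1}-\tfrac{1}{m}\big)\delta^m = -\delta + \tfrac{\delta^2}{2} + \sum_{m\geq 3}\tfrac{\delta^m}{m(m-1)}$, whose tail terms are all nonnegative for $\delta \in [0,1)$; this gives the inequality at once. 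Equivalently one may check that $h(\delta) \defeq (1-\delta)\ln(1-\delta) + \delta - \delta^2/2$ satisfies $h(0)=0$ and $h'(\delta) = -\ln(1-\delta) - \delta = \sum_{m\geq 2}\delta^m/m \geq 0$, so $h \geq 0$ on $[0,1)$. Combining this with the intermediate estimate yields the stated lower Chernoff bound, completing the proof.
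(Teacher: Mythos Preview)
The paper does not supply a proof of this fact; it is simply recorded in the Miscellaneous appendix as a standard (folklore) Chernoff bound that the earlier arguments invoke. Your exponential--moment derivation is the textbook route and is correct, including the series computation $(1-\delta)\ln(1-\delta) = -\delta + \sum_{m\geq 2}\tfrac{\delta^m}{m(m-1)}$ that yields the clean $\exp(-\delta^2\mu N/2)$ form. You also rightly flag that the lower-tail inequality as printed should read $\mathbb{P}(\tfrac1N\sum_i x_i \leq (1-\delta)\mu)$ rather than $\geq$. The only cosmetic loose end is the endpoint $\delta=1$, where the optimizer $e^{-t}=1-\delta$ degenerates; this is handled by continuity (both sides of the scalar inequality extend continuously, giving $0 \geq -\tfrac12$), so no real gap.
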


\end{document}